\documentclass[a4paper,UKenglish,cleveref, autoref, thm-restate]{lipics-v2021}
\usepackage{csquotes}
\usepackage{todonotes}
\usepackage[svgnames]{xcolor}

\usepackage[normalem]{ulem}

\newcommand{\set}[1]{\{ #1 \}}

\newcommand{\lpt}{\mathsf{lpt}}
\newcommand{\lct}{\mathsf{lct}}
\newcommand{\tw}{\mathsf{tw}}
\newcounter{tbox}

\usepackage{tcolorbox}

\newtheorem{question}[theorem]{Question}

\nolinenumbers

\tcbuselibrary{skins}

\newtcolorbox{mybox}[1]{minipage boxed title*=-2cm,
enhanced,attach boxed title to top center=
{yshift=-3mm,yshifttext=-1mm},colback=Lavender!30!white,
boxed title style={size=small,colback=Lavender},coltitle=black,
center title,title={#1}}

\title{The Complexity of Computing a Gallai Vertex} 
\title{The Gallai Vertex Problem is $\Theta_2^p$-Complete.}

\funding{This work was supported by the VILLUM Foundation grant (VIL37507) ``Efficient Recomputations for Changeful Problems'', and the Independent Research Fund Denmark (DFF), grant agreement number 2098-00012B.}

\author{Amir Nikabadi}{ IT University of Copenhagen, Denmark }{amir@itu.dk}{0009-0002-1446-1935}{%
}

\author{Eva Rotenberg}{ IT University of Copenhagen, Denmark }{erot@itu.dk}{0000-0001-5853-7909}{}

\author{Lasse Wulf}{ IT University of Copenhagen, Denmark }{lasw@itu.dk}{0000-0001-7139-4092}{}

\authorrunning{A. Nikabadi, E. Rotenberg, L. Wulf} 

\Copyright{A. Nikabadi, E. Rotenberg, L. Wulf} 

\keywords{Gallai vertex, longest path transversal, longest cycle transversal, computational complexity, parallel access to NP, NP-hard, approximation, $\Theta^p_2$, polynomial hierarchy, boolean hierarchy} 

\category{} 

\relatedversion{} 

\ccsdesc[500]{Theory of computation~Problems, reductions and completeness}
\ccsdesc[500]{Mathematics of computing~Graph algorithms}

\EventEditors{}
\EventLongTitle{}
\EventShortTitle{}
\EventAcronym{}
\EventYear{}
\EventDate{}
\EventLocation{}
\EventLogo{}
\SeriesVolume{}
\ArticleNo{}

\begin{document}

\maketitle
   
\begin{abstract}
When a graph $G$ admits a vertex $v$ that is contained in all its longest paths, we call $v$ a \emph{Gallai vertex}. These are named after Gallai, who in 1966 asked the question if it is true that every connected graph contains such a vertex. This was soon answered in the negative by Walther and Zamfirescu, who presented a graph in which every vertex is omitted by some longest path of the graph.

In spite of its long history, the Gallai Vertex Problem, i.e. determining whether a graph has a Gallai vertex, was until now neither known to be NP- nor co-NP-hard. In this work, we show something much stronger, as
we completely settle the computational complexity of determining whether a graph has a Gallai vertex:
we show that it is complete for the complexity class $\Theta_2^p = \text{P}^{\text{NP}[\log n]}$.
This class, also known as \enquote{parallel access to NP}, is a complexity class larger than{} NP situated just below the class $\Sigma^p_2$ in Stockmeyer's polynomial hierarchy.

In more generality,
the \textit{longest path transversal number} of a connected graph is the minimum size of a set of vertices that intersects all its longest paths. I.e. if the graph has a Gallai vertex, its longest path transversal number is $1$. 
Thus, as a consequence of our theorem, the longest path transversal number of a graph cannot be approximated in polynomial time by a factor better than 2, unless $\text{P} = \text{NP}$. 
In fact, using related techniques, we show a strengthening of this result: For any constant $C$, if there is a graph with longest path transversal number $C$, then there is no polynomial time algorithm for approximating the longest path transversal number by a factor better than $C$, unless $\text{P} = \text{NP}$. In particular, this excludes approximation by a factor below $3$. Similar results hold for the longest cycle transversal.
\end{abstract}

\section{Introduction}
\label{sec:introduction}

\begin{figure}[t]
    \centering
    \includegraphics[width=0.25\textwidth,page=1]{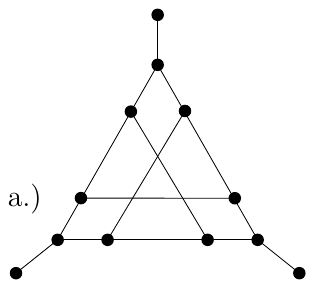}
    \hfill
    \includegraphics[width=0.25\textwidth,page=2]{img/walther-zamfirescu.pdf}
    \hfill
    \includegraphics[width=0.25\textwidth,page=3]{img/walther-zamfirescu.pdf}
    \caption{a.) The Walther-Zamfirescu graph. b.) The longest path has length 9. The paths $P_1$ and $P_2$ have length 9 and every vertex is omitted by some rotation/reflection of $P_1$ or $P_2$.}
    \label{fig:petersonblow}
\end{figure}

It is a common exercise in an introductory graph theory course to show that every two longest paths in a connected graph share a common vertex. An old question of Gallai~\cite{Gallai68} asks a generic form of this, whether all longest paths in a connected graph share a common vertex. Such a vertex is called a \emph{Gallai vertex}. According to a counterexample due to Walther~\cite{walther1969nichtexistenz} and Zamfirescu~\cite{zamfirescu1976longest}, there exists a graph (depicted in~\Cref{fig:petersonblow}) such that every vertex is omitted by some longest path of the graph, providing a negative answer to Gallai's question. 
However, one could be modest and ask for a small subset $S$ of vertices, instead of only one vertex, such that every longest path in the graph has a vertex from the subset $S$. Such a subset is called a \emph{longest path transversal}. 
The \textit{longest path transversal number} of a connected graph $G$, denoted by $\lpt(G)$, is the minimum cardinality of a longest path transversal in $G$. 
 In this language, Gallai's question asked whether $\lpt(G)$=1 for any graph $G$.

Note that the graph in \Cref{fig:petersonblow} has no Gallai vertex, in particular it has 
$\lpt = 2$. The smallest known graph $G$ with $\lpt(G)=3$ is given by Zamifirescu~\cite{zamfirescu1976longest} and has 270 vertices.
Amazingly, even 50 years after Gallai's original question it is still unknown whether there exists a connected graph $G$ with $\lpt(G)\geq4$. 
Accordingly, it has become a major open question, raised independently by Walther and Zamfirescu \cite{zamfirescu1976longest}, whether there exists a universal constant bounding the longest path transversal of any connected graph. 
\begin{question}\label{ques:constantupper}
Is there a constant $c\in \mathbb{N}$ such that every connected graph $G$ satisfies $\lpt(G) \leq  c$?
\end{question}
While an extensive body of work has been devoted to the study of~\Cref{ques:constantupper}, the best known upper bounds are far from  constant. 
The first nontrivial upper bound was given by Rautenbach and Sereni \cite{rautenbach2014transversals}, who showed that for all connected graphs $G$ it holds that $\lpt(G) \leq \lceil \frac n 4 - \frac{n^{2/3}}{90}\rceil$. 
This was improved to sublinear bounds, first to $\lpt(G) = O(n^{3/4})$ by  Long Jr., Milans and Munaro
\cite{long2021sublinear}, then to $\lpt(G) = O(n^{2/3})$ by  Kierstead and Ren \cite{kierstead2023improved}. Norin, Steiner, Thomassé, and Wollan~\cite{norin2025small} proved the currently best known upper bound of $\lpt(G) = \mathcal{O}(\sqrt{n})$.

\Cref{ques:constantupper} remains widely open even for special classes of graphs, such as chordal graphs, though it has been resolved positively for some of its subclasses~\cite{balister2004longest, jobson2016detour}, for circular-arc graphs~\cite{balister2004longest,joos2015note}, and for bipartite permutation graphs~\cite{cerioli2020}. 
Several results also provide partial answers for~\Cref{ques:constantupper} when restricted to $H$-free graphs 
\cite{long2023non, de2025hitting, lima2025longest}.
For the special case where $G$ is chordal, Harvey and Payne~\cite{harvey2023intersecting} showed $\lpt(G) \leq 4\lceil(\omega(G))/5 \rceil$, while
Long Jr., Milan and Munaro~\cite{long2024longest} showed that if $G$ is an $n$-vertex connected chordal graph, then $\lpt(G) = \mathcal{O}(\log^{2}n)$.
One can obtain that $\lpt(G) \leq \tw(G)+1$ as a consequence of the result of Seymour and Thomas~\cite{seymour1993graph} on brambles.
Many authors have come up with examples showing that many interesting graph classes contain graphs with $\lpt(G) \geq 2$ or $\lpt(G) \geq 3$. We refer the interested reader to the surveys \cite{zamfirescu2001intersecting,shabbir2013intersecting}.

Closely related to the longest path transversal is the longest cycle transversal number $\lct(G)$. Here, given a graph $G$, the question becomes to find a small set of vertices hitting all the longest cycles of $G$.
Very similar bounds as for $\lpt(G)$ were shown for $\lct(G)$ in \cite{rautenbach2014transversals,long2021sublinear,kierstead2023improved,norin2025small}. 
Regarding the differences between longest path and longest cycle transversal, we refer the reader to \cite{shabbir2013intersecting,norin2025small}.

\subparagraph{Complexity.} 
We consider in this paper the computational complexity to determine if a given graph has a Gallai vertex. Formally, this is the following problem.

\begin{mybox}{\textsc{Gallai Vertex}}
\textbf{Input:} A connected graph $G$.\\
\textbf{Task:} Decide whether $G$ admits a Gallai vertex.
\end{mybox}

Even though Gallai's question is from 1966, little was known until now about the computational complexity of \textsc{Gallai Vertex}. One possible explanation for this lack of knowledge is that it seems very difficult to find graphs with $\lpt(G) > 1$, and therefore seems hard to reason about such graphs.
Unlike many similar-looking computational problems, \textsc{Gallai Vertex} was neither known to be NP-hard, nor was it easy to imagine any polynomial-sized polynomial-time checkable certificate for the problem. In particular, any such certificate would have to argue about all (potentially exponentially many) longest paths in the graph. And indeed, as a consequence of our theorem, $\textsc{Gallai Vertex} \not\in \text{NP}$ unless common complexity-theoretic assumptions fail.

\subparagraph{Our contribution.} 
We are the first to show that \textsc{Gallai Vertex} is NP-hard. 
Moreover, in this paper, we 
completely settle the complexity of the problem. We show that \textsc{Gallai Vertex} is complete for the complexity class $\Theta_2^p = \text{P}^{\text{NP}[\log n]}$. 
This curious complexity class was first studied by Papadimitriou and Zachos~\cite{Papadimitriou1983TwoRO} and is defined as the class of problems that can be solved with a polynomial time algorithm using $O(\log n)$ oracle calls to NP. There are not many known $\Theta_2^p$-complete problems, especially not outside of the research areas of logic and computational social choice. 
Furthermore, a large portion of the known $\Theta_2^p$-complete problems are considered unnatural, a fact which has been acknowledged many times in the literature \cite{papadimitriou2003computational,hemaspaandra1997exact,hemaspaandra2005complexity,riege2006completeness}. In this sense, another  contribution in this paper is to present a new and arguably very natural $\Theta_2^p$-complete problem. 


\subparagraph{Inapproximability.}
Our result implies that it is $\Theta_2^p$-hard to distinguish between graphs with $\lpt(G) = 1$ and $\lpt(G) \geq 2$. Therefore, we obtain as an immediate consequence, that it is $\Theta_2^p$-hard (and therefore also NP-hard) to approximate $\lpt(G)$ to a factor better than 2. 

We strengthen this result to the following: For any constant $C$, if there is a graph $H$ with $\lpt(H) =C$, then it is $\Theta_2^p$-hard to distinguish whether a given graph $G$ has $\lpt(G) = 1$ and $\lpt(G) \geq C$. 
In particular, since there exists a graph of $\lpt(H) = 3$ shown in \cite{grunbaum1973vertices,zamfirescu1976longest}, this shows polynomial time inapproximability by a factor of $3-\varepsilon$ for any $\varepsilon>0$.

If it should turn out that $\lpt(G)$ can take arbitrarily large values, then our paper in this case proves that there is no constant-factor approximation algorithm. Our inapproximability result holds even when restricted to planar graphs. 

\subparagraph{Longest cycle.} Finally, we show using similar ideas that it is $\Theta^p_2$-hard to approximate the longest cycle transversal. Furthermore, it is $\Theta^p_2$-complete to decide if there exists a single vertex that hits all longest cycles. This holds even in the $2$-connected planar case.

\section{Preliminaries and further related work}
\label{sec:prelim}
We use $\mathbb{N}$ to denote the set of positive integers. We let $[n]:= \{1,\dots, n\}$ for every $n\in \mathbb{N}$.
Graphs in this paper, unless stated otherwise, have finite vertex sets and no loops or parallel edges. Let $G = (V,E)$ be a graph. 
For $X \subseteq V(G)$, we denote the subgraph of $G$ induced by $X$ as $G[X]$, that is $G[X] = (X, \{uv \colon u, v \in X \mbox{ and } uv \in E \})$. We let $G - v := G[V \setminus \{v\}]$. 
We let $P = p_1p_2\dots p_k$ denote a path in $G$.
The length of $P$ is the number of edges in $P$.
We call the vertices $p_1$ and $p_k$ the endpoints of $P$ and say that $P$ is a path from $p_1$ to $p_k$. For some (possibly induced sub-) graph $G$, we denote the length of the longest path inside $G$ by $\lambda(G)$.
The \textsc{3SAT} problem is the following decision task: Given a boolean formula in 3-conjunctive normal form, decide if there exists a satisfying assignment. 

\subparagraph{The class $\Theta_2^p$.}
Papadimitriou and Zachos~\cite{Papadimitriou1983TwoRO} introduced the class $\Theta^p_2 = \text{P}^{\text{NP}[\log n]}$ as the class of problems that can be computed in polynomial time using $O(\log n)$ many oracle calls to NP.
The class $\Theta^p_2$ falls in between the first two levels of Stockmeyer's \emph{polynomial hierarchy} \cite{stockmeyer1976polynomial}: We have $\text{NP} \subseteq \text{P}^{\text{NP}[\log n]} = \Theta^p_2 \subseteq \text{P}^\text{NP} \subseteq \text{NP}^\text{NP}$. 
Complexity theorists have the strong belief that all these inclusions are strict (the polynomial hierarchy does not collapse), but the true status is unknown. 
Wagner \cite{wagner1990bounded} introduced the name $\Theta^p_2$ and showed that the class can be characterized in many equivalent ways. Kadin \cite{kadin1989pnp} has proven that if NP has a sparse Turing-complete set then the polynomial hierarchy collapses to $\Theta^p_2$.

There are not many known natural $\Theta^p_2$-complete problems. Prominently, in the area of computational social choice, the class $\Theta^p_2$ plays a role.
 In 1876, Charles Lutwidge Dodgson, nowadays better known under his pen name Lewis Caroll, 
 proposed a voting rule for which it may be computationally nontrivial to determine who is the winner of the election~\cite{Dodgson1876Method}. Hemaspaandra, Hemaspaandra, and Rothe \cite{hemaspaandra1997exact} proved that determining the Winner under Dodgon's rule is $\Theta^p_2$-complete. 
 Similar results were soon after obtained for Young's rule and Kemeny's rule \cite{rothe2003exact,hemaspaandra2005complexity}. Further natural $\Theta^p_2$-complete results are known in AI and logic \cite{eiter1997complexity}. 
 We also mention \cite{riege2006completeness, matuschke2025stronger}.

Hemachandra \cite{hemachandra1987strong}, and independently Köbler, Schöning,
and Wagner \cite{kobler1987difference} proved that the class $\Theta^p_2$ can be equivalently characterized as  $\Theta^p_2 = \text{P}^{\text{NP}[\log n]} = \text{P}_\parallel^\text{NP}$.
Here $\text{P}_\parallel^\text{NP}$ denotes the class of all problems that can be solved with so-called \emph{parallel access} to an NP-oracle (also sometimes called \emph{non-adaptive access}). 
Formally, a decision problem $\Pi$ is contained in $\text{P}_\parallel^\text{NP}$ if there exists a polynomial-time Turing machine, which, upon receiving an instance $I_\Pi$ of problem $\Pi$ creates a table of polynomially many different membership queries to some NP-complete problem. 
The Turing machine has to first write the entire table. It will then receive the answers of all the queries at once (i.e. in parallel). 
It cannot make any additional queries and has to make its final decision about instance $I_\Pi$ based on the answers in the truth table. 
As stated before, it is known that $O(\log n)$ many \emph{adaptive} NP-queries is equivalent to $\text{poly}(n)$ many \emph{non-adaptive} NP-queries. 

\section{Complexity of computing a Gallai vertex}

In this section, we prove that \textsc{Gallai Vertex} is $\Theta^p_2$-complete. We have to show containment and hardness for $\Theta^p_2$. 
We start with the containment. Then, before proceeding to $\Theta_2^p$ hardness, we present a proof of NP-and co-NP hardness, based on a modification of the Walther-Zamfirescu graph in Figure~\ref{fig:petersonblow}. Finally, expanding these ideas even further, we show $\Theta_2^p$ hardness of the problem.

\subsection{\textsc{Gallai Vertex} is contained in $\Theta_2^p$.}

\begin{lemma}
\label{lem:containment}
    \textsc{Gallai Vertex} is contained in $\Theta^p_2$.
\end{lemma}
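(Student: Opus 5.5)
We will use the characterization $\Theta^p_2 = \text{P}^{\text{NP}[\log n]}$ and give a polynomial-time algorithm that asks only $O(\log n)$ adaptive NP queries. The plan has two phases.

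\textbf{Phase 1: find $\lambda(G)$.} Let $n = |V(G)|$. The longest path length $\lambda(G)$ is an integer in $\{0,\dots,n-1\}$. The language $\{(G,k) : G \text{ contains a path of length at least } k\}$ is in NP, since a path is a polynomial-size certificate. We binary search on $k$ with this oracle, which determines $\lambda(G)$ exactly using $O(\log n)$ queries.

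\textbf{Phase 2: test for a Gallai vertex with one more query.} Once $L := \lambda(G)$ is known, we use the following equivalence. $G$ has no Gallai vertex if and only if for every $v \in V(G)$ the graph $G - v$ contains a path of length $L$.
\begin{itemize}
\item Suppose $G$ has no Gallai vertex. Then every $v$ is avoided by some longest path of $G$, so $G - v$ has a path of length $L$.
\item Conversely, suppose each $G - v$ has a path of length $L$. Such a path has length $L$ in $G$, and $L$ is the maximum, so it is a longest path of $G$ avoiding $v$.
\end{itemize}
The language $\{(G,L) : \forall v\ \exists \text{ path of length} \geq L \text{ in } G - v\}$ is in NP. The certificate is a list of $n$ paths, one per vertex, and it can be verified in polynomial time. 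One query to this language therefore decides whether $G$ has a Gallai vertex: we accept exactly when the oracle answers \enquote{no}. The algorithm uses $O(\log n)+1$ NP queries in total, which places \textsc{Gallai Vertex} in $\text{P}^{\text{NP}[\log n]} = \Theta^p_2$.

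Two small points need care.
\begin{itemize}
\item The oracle queries go to two different NP languages. Both reduce in polynomial time to a single NP-complete language such as \textsc{3SAT}, so this is allowed.
\item The second query is only meaningful when $L$ is the exact value of $\lambda(G)$. If $L$ were too small, paths of length $L$ would not be longest paths, and the equivalence would fail. The binary search guarantees exactness, so the main thing to check is that the equivalence uses maximality of $L$, which it does in the converse direction.
\end{itemize}

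As an alternative, we could use the parallel characterization $\text{P}_\parallel^{\text{NP}}$. Ask in parallel, for each $k$, both \enquote{is $\lambda(G)\ge k$?} and \enquote{does every $G-v$ have a path of length $\ge k$?}. Then read off the answer at $k=\lambda(G)$. This avoids adaptivity entirely.
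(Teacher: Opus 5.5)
Your proof is correct, but it takes a different route from the paper's. The paper works in the parallel characterization $\text{P}_\parallel^{\text{NP}}$. It asks all $(n+1)n$ queries $\textsc{Path}(G,k)$ and $\textsc{Path}(G-v,k)$ non-adaptively, reads off $\lambda(G)$ and every $\lambda(G-v)$ exactly, and accepts iff some deletion makes $\lambda$ drop.

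You instead use the definition $\text{P}^{\text{NP}[\log n]}$ directly. Binary search finds $\lambda(G)$ with $O(\log n)$ adaptive queries. Your extra ingredient is that the $n$ per-vertex tests fit into a single NP query, because a polynomial conjunction of NP predicates is again in NP (the certificate is the list of $n$ paths).

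What your route buys:
\begin{itemize}
\item It is self-contained: it does not need the theorem $\text{P}^{\text{NP}[\log n]} = \text{P}_\parallel^{\text{NP}}$ of Hemachandra and K\"obler--Sch\"oning--Wagner, which the paper invokes.
\item The same packing trick cuts your parallel variant to $O(n)$ queries instead of the paper's $O(n^2)$.
\end{itemize}
What the paper's route buys:
\begin{itemize}
\item It needs only the one language \textsc{Path} and no closure argument.
\item Because it computes every $\lambda(G-v)$, it identifies the full set of Gallai vertices, not just whether one exists.
\end{itemize}
Both arguments rest on the same equivalence: $v$ is a Gallai vertex iff $\lambda(G-v) < \lambda(G)$. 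You correctly note that your single bundled query is only valid when $L$ equals $\lambda(G)$ exactly.
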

\begin{proof}
As explained in \cref{sec:prelim}, the class $\Theta^p_2$ can be characterized as the class of problems solvable in polynomial time while having access to polynomially many parallel NP-queries. Hence it is enough to show how \textsc{Gallai Vertex} can be solved with such a parallel oracle.
Consider the following decision problem \textsc{Path}: The input is a tuple $(G, k)$ of a graph $G$ and some integer $k \geq 1$. The question is if $G$ contains a path of length at least $k$. Clearly, $\textsc{Path} \in \text{NP}$. 
Consider now the problem $\textsc{Gallai Vertex}$. Given a graph $G$, we show to find out if $G$ has a Gallai vertex, making use of access to a parallel oracle for \textsc{Path}.
Let $n := |V(G)|$. The algorithm upon receiving $G$ creates a table consisting of a total of $(n+1)n$ queries as follows.
\begin{align*}
\bigcup_{k=1}^n &\{ \textsc{Path}(G, k)? \} \\
\cup \bigcup_{v \in V(G)}\bigcup_{k=1}^n &\{ \textsc{Path}(G - v, k)? \}
\end{align*}
Given the answer to these $O(n^2)$ queries, one can easily figure out in polynomial time the exact length of the longest path in the graph $G$, as well as the exact length of the longest path in $G - v$ for each $v \in V(G)$. In particular, we can find out if we can delete a vertex such that the length of the longest path decreases. 
Hence, by definition, we can decide if there exists a Gallai vertex. We conclude $\textsc{Gallai Vertex} \in \Theta^p_2$.
\end{proof}

\subsection{Warm up: \textsc{Gallai Vertex} is  NP- and co-NP-hard.}

For the hardness part, as a stepping stone we first show the strictly weaker statement that \textsc{Gallai Vertex} is both NP-hard and co-NP-hard. We then show afterwards how to further strengthen the argument to show $\Theta^p_2$-hardness. 
We now begin with an explanation of the proof. Consider \cref{fig:WZ-paths}. It shows a modification of the Walther-Zamfirescu graph, which results from the original graph by appending three paths of length $k_T,k_L$ and $k_R$ each at the topmost ($T$), leftmost ($L$) or rightmost ($R$) vertex (in the following, we always assume $k_T,k_L,k_R \geq 1$). Let us call this graph $G(k_T, k_L, k_R)$. Does $G(k_T, k_L, k_R)$ have a Gallai vertex?
\begin{figure}[htpb]
    \centering
    \includegraphics[width=0.55\textwidth,page=1]{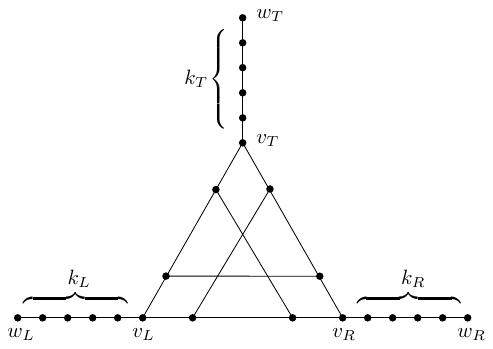}
    \caption{The graph $G(k_T,k_L,k_R)$ results from the  Walther-Zamfirescu graph.}
    \label{fig:WZ-paths}
\end{figure}

\begin{lemma}
\label{lem:3-paths}
        If $k_T = k_L = k_R$ then the graph $G(k_T,k_L,k_R)$ does not have a Gallai vertex. In all other cases, it does have one.
\end{lemma}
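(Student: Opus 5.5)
We would prove \Cref{lem:3-paths} by reducing it to a structural statement about longest paths inside the Walther--Zamfirescu graph $W$ itself, and then treating the pendant paths as weights attached to the three special vertices $T$, $L$, $R$.

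\textbf{Structure of longest paths.} Let $P$ be a longest path in $G(k_T,k_L,k_R)$. Any path of $G(k_T,k_L,k_R)$ consists of a path $Q$ in $W$, possibly extended at each end by a pendant path. The pendant attached at $X\in\{T,L,R\}$ can only be used if the corresponding end of $Q$ is $X$. Since the pendants are used in full and are at least as good as any shorter continuation, $P$ has one of the following forms:
\begin{itemize}
\item pendant at $X$, then a path in $W$ from $X$ to $Y$, then pendant at $Y$, of length $k_X+k_Y+d_{XY}$, where $d_{XY}$ is the longest $X$--$Y$ path length in $W$;
\item or it uses at most one pendant, which should be dominated.
\end{itemize}
The key step is to verify by direct inspection of $W$ (using its symmetry, with the three special vertices playing symmetric roles under rotation) three facts:
\begin{enumerate}
\item $d_{TL}=d_{LR}=d_{RT}=:d$;
\item $d+1$ exceeds the length of every path in $W$ that has at most one end at a special vertex, so any longest path uses two pendants (this holds since each $k\ge1$, provided $d$ is close to $\lambda(W)=9$);
\item for each pair $\{X,Y\}$, the union of all longest $X$--$Y$ paths in $W$ covers all of $V(W)$ \emph{except} that every vertex of $W$ is missed by some longest $X$--$Y$ path.
\end{enumerate}
More precisely, for every vertex $v$ of $W$ there is a longest $X$--$Y$ path avoiding $v$. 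I expect this verification to be the main obstacle, since it is a finite but fiddly case analysis on the graph in \Cref{fig:petersonblow}. I would organise it using the paths $P_1,P_2$ from the figure and their rotations and reflections, identifying $T$, $L$, $R$ with the endpoints of those paths. For the ``avoid every vertex'' part, a weaker statement suffices: the intersection of all longest $X$--$Y$ paths in $W$ is exactly $\{X,Y\}$.

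\textbf{Case $k_T=k_L=k_R=k$.} All three pairs achieve the maximum length $2k+d$. A Gallai vertex would have to lie on longest paths for every pair. A vertex on a pendant or at $X$ is avoided by the longest path for the opposite pair $\{Y,Z\}$. A vertex $v$ interior to $W$ is avoided by the longest $Y$--$Z$ path that skips $v$. Hence no Gallai vertex exists.

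\textbf{Other case.} Suppose not all three values are equal. Then the maximum of $k_X+k_Y$ over pairs is attained either by a unique pair, or by two pairs sharing a vertex: if $k_T>k_L=k_R$, the pairs $\{T,L\}$ and $\{T,R\}$ both tie. In both situations some special vertex $X$ lies in every maximizing pair. By fact (2), every longest path uses two pendants of a maximizing pair, so it contains $X$, and $X$ is a Gallai vertex.
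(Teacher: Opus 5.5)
\textbf{Gap: your fact (3) is false.} Your overall architecture matches the paper's. You treat the pendants as weights at $T,L,R$, use symmetry to get $d_{TL}=d_{LR}=d_{RT}$, and in the unequal case pick the special vertex common to all maximizing pairs. The problem is fact~(3), which you single out as the crux. It is false, and so is your ``weaker'' version saying the longest $X$--$Y$ paths intersect exactly in $\{X,Y\}$.

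Write the core of the Walther--Zamfirescu graph as the Petersen graph $P$ minus a vertex $u$, with the three arms hanging at the neighbours $x,y,z$ of $u$. A longest path joining the arms at $x$ and $y$ uses an $x$--$y$ path $Q$ in $P-u$ on $8$ of its $9$ vertices; all $9$ would close a Hamiltonian cycle of $P$ through $u$. So $Q+u$ is a $9$-cycle of $P$ through $ux$ and $uy$ that misses a single vertex $w$. Suppose $w$ were a neighbour $z'\neq u$ of $z$. Then $z$ would have degree $2$ in $P-z'$, so the Hamiltonian cycle $Q+u$ of $P-z'$ would also contain $uz$, giving $u$ degree $3$ on it. Hence both neighbours of $z$ other than $u$ lie on \emph{every} longest $x$--$y$ path. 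To avoid them you must switch to the pair $\{x,z\}$ or $\{y,z\}$, which is exactly why the paper needs all rotations and reflections of $P_1,P_2$.

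The statement your equal case actually needs, and which is true, is the Walther--Zamfirescu omission property in this form: every vertex of $W$ is avoided by a longest $X$--$Y$ path for \emph{some} pair $\{X,Y\}$. With that replacement your equal-case argument goes through.

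\textbf{Secondary issue: fact (2) depends on where the arms are attached.} Under the normalization in the paper's proof (longest path $2k_T+7$, i.e.\ arms hanging directly at $x,y,z$, so $d=7$), it fails. $P-u$ is Hamiltonian and $x$ has degree $2$ in it, so $P-u$ contains a path of length $8=d+1$ starting at $x$. For example, $G(2,1,1)$ has longest paths of length $10$ that use only the top pendant.

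Your unequal-case conclusion survives. A one-pendant path can only attain the maximum if its pendant has maximal weight, and then it passes through that attachment vertex, which lies in every maximizing pair. But the claim ``every longest path uses two pendants'' is not true in general.

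The paper avoids classifying longest paths altogether. A path avoiding $v_T$ lives in $G-v_T$, whose longest path has length at most $\max\{k_T-1,\,k_L+k_R+7\}<k_T+k_L+7$.
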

\begin{proof}
    First, assume that $k_T = k_L = k_R$. Then, it is easy to verify that the longest path in $G(k_T,k_L,k_R)$ has length $2k_T + 7$. 
    Furthermore, we can consider paths very similar to $P_1, P_2$ in \cref{fig:petersonblow}, which have length $2k_T + 7$, and for every vertex $v \in V(G(k_T,k_L,k_R))$ one of the rotations or reflections of the paths omit the vertex $v$. This shows that $G(k_T,k_L,k_R)$ does not have a Gallai vertex.

    Second, assume that not all of $k_T,k_L,k_R$ are equal. Due to rotational and reflectional symmetry we can w.l.o.g. assume $k_T \geq k_L \geq k_R$. Again, it is easy to verify that the longest path now has length $k_T + k_L + 7$. We claim that the vertex $v_T$ is a Gallai vertex. 
    Indeed, let $P$ be a path that does not visit $v_T$. 
    Then, $P$ is a path in the graph $G(k_T,k_L,k_R) - v_T$. Therefore it has length at most $\max\{k_T - 1, k_L + k_R + 7 \} < k_T +k_L + 7$. The strict inequality is due to our assumption about $k_T,k_L,k_R$. Hence $P$ is not a longest path.
    We conclude that $v_T$ is a Gallai vertex, as desired.
\end{proof}

The next step in the proof is to consider the same basic idea, but instead of three paths of length $k_T,k_L,k_R$ consider more complicated \emph{gadget graphs} $G(\varphi)$. The following lemma states which properties we expect from such gadget graphs.

\begin{lemma}
    \label{lem:gadget-ham-path}
    Given a \textsc{3SAT} formula $\varphi$, one can compute in polynomial time a graph $G(\varphi)$, two vertices $v,w$, and an integer $k \geq 1 $ such that
    \begin{itemize}
        \item If $\varphi$ is satisfiable, then the longest path in $G(\varphi)$ has length $k$.
        \item If $\varphi$ is not satisfiable, then the longest path in $G(\varphi)$ has length $k-1$.
    \end{itemize}
    Furthermore, in both cases, every longest path in $G(\varphi)$ starts at $v$ and ends at $w$.
\end{lemma}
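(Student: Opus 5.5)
We will start from a standard polynomial-time reduction from \textsc{3SAT} to \textsc{Hamiltonian Path} with prescribed endpoints, and then pad the resulting graph so that a non-Hamiltonian graph still has a path of length exactly one less. Concretely, from $\varphi$ we first compute a graph $H$ with two vertices $s,t$ and $N=|V(H)|$ such that $H$ has a Hamiltonian $s$--$t$ path iff $\varphi$ is satisfiable. We may choose the reduction so that $s$ and $t$ have degree one in $H$; otherwise we attach new pendant vertices to them. Then every Hamiltonian path of $H$ has endpoints $s$ and $t$.

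The key step is to arrange that, if $\varphi$ is unsatisfiable, the longest path of the gadget misses exactly one vertex and still runs between the designated endpoints. We will use the standard satisfiability gadget structure, namely variable ``ladders'' traversed left or right and clause vertices visited by detours. In this structure there is always an $s$--$t$ path covering all vertices except the clause vertices. We then modify the construction so that the clause checks are aggregated. One option is to add a single ``bypass'' vertex $b$ whose removal, or a single skipped vertex, always allows a path through everything else.

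A cleaner alternative, which we would try first, is the following. Take $G(\varphi)$ to be $H$ together with a new vertex $u$ adjacent to every vertex of $H$ except $s$ and $t$. Add new endpoints $v$ and $w$ attached to $s$ and $t$ respectively by long pendant paths of length $L \gg N$. We then set $k$ to be the length of a Hamiltonian path of the whole graph.

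The analysis has two parts.
\begin{itemize}
\item \textbf{Upper bound and endpoints.} Because the tails are long, every longest path must use both tails, so it starts at $v$ and ends at $w$. This gives the ``furthermore'' clause.
\item \textbf{Lengths.} If $\varphi$ is satisfiable, the Hamiltonian $s$--$t$ path of $H$ can be extended through $u$ by inserting $u$ between two consecutive interior vertices. This is possible since $u$ is adjacent to all interior vertices, and it gives length $k$. If $\varphi$ is unsatisfiable, the vertex $u$ must be used to repair the path. Here we would need that $H+u$ always has a Hamiltonian $s$--$t$ path while $H+u$ plus one more vertex does not; with this universal $u$ that is not automatic.
\end{itemize}

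The main obstacle is therefore to guarantee that the unsatisfiable case gives length \emph{exactly} $k-1$, and not something smaller. The intended fix is to use the reduction's own structure. In the variable–clause graph, a path that traverses all variable ladders consistently with any assignment, and skips only the clause vertices, exists. We will merge all clause checks into a chain, so that a single ``cheat'' vertex $c$, adjacent to the right places, lets the path skip the whole clause chain at the cost of omitting exactly one vertex. We will then verify the following: no path covers all vertices unless the assignment read off the ladders satisfies all clauses, and omitting any one designated vertex always permits an $s$--$t$ path of length $k-1$. Checking this exact-by-one gap carefully in the gadget is where we expect most of the work to lie.
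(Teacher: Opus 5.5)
Your proposal has a genuine gap. The step you identify as the main obstacle, guaranteeing that for unsatisfiable $\varphi$ the longest path has length \emph{exactly} $k-1$, is never carried out, so neither bullet of the lemma is established for any concrete graph.

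The universal-vertex construction fails for a more basic reason than the one you give. Suppose the unsatisfiable $H$ splits into a path from $s$ to an interior vertex $a$ and a path from an interior vertex $b$ to $t$, together covering $V(H)$. Then $s\cdots a\,u\,b\cdots t$ is a Hamiltonian $s$--$t$ path of $H+u$. In that case the unsatisfiable case also reaches length $k$ and the separation is lost. When no such split exists, nothing prevents the length from dropping far below $k-1$.

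The merged-clause-chain idea with a cheat vertex $c$ is only a plan. A vertex that lets the path skip the whole clause chain omits all of that chain's vertices, not one. You would also have to rule out that $c$ creates new Hamiltonian paths for unsatisfiable $\varphi$. All of that verification is deferred.

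The missing idea is that the gadget's longest path does not need to miss exactly one vertex, and you need not touch the internals of the reduction. Keep your first step, which gives $H$ on $N$ vertices with a Hamiltonian $s$--$t$ path iff $\varphi$ is satisfiable. Then:
\begin{itemize}
\item add a \emph{new} path of length $N-2$, with $N-3$ fresh internal vertices, directly between $s$ and $t$;
\item attach tails of length $N$ from $s$ to $v$ and from $t$ to $w$;
\item set $k=3N-1$.
\end{itemize}

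A path meeting both tails must cross from $s$ to $t$ either entirely inside $H$ or entirely along the bypass, because the bypass's internal vertices have degree $2$. A crossing inside $H$ has length at most $N-1$ if $\varphi$ is satisfiable and at most $N-2$ otherwise. The bypass crossing has length exactly $N-2$. So the maximum is $3N-1$, respectively $3N-2$. The bypass supplies the $k-1$ lower bound independently of the structure of $H$.

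A path meeting at most one tail has at most $N+(2N-3)$ vertices, so length at most $3N-4$. Hence every longest path uses both full tails and runs from $v$ to $w$, which gives the endpoint claim.

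This is exactly the paper's proof, with Karp's reduction supplying $H$ and its endpoints $x_1,x_2$.
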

\begin{proof}
\begin{figure}[htpb]
    \centering
    \includegraphics[width=0.5\textwidth]{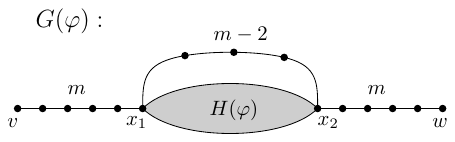}
    \caption{The gadget graph $G(\varphi).$}
    \label{fig:gadget-ham-path}
\end{figure}
    By the classic result of Karp \cite{DBLP:conf/coco/Karp72}, the Hamiltonian path problem is NP-complete. The proof of Karp implies the slightly stronger statement: Given the \textsc{3SAT} formula $\varphi$, one can compute in polynomial time a graph $H(\varphi)$ and two of its vertices, $x_1,x_2$, such that if $\varphi$ is satisfiable, 
    then there is a Hamiltonian path in $H(\varphi)$ from $x_1$ to $x_2$, and if $\varphi$ is not satisfiable there is no Hamiltonian path at all in $H(\varphi)$.
    Let $m := |V(H(\varphi))|$ be the number of vertices of the constructed instance.
    Consider now \cref{fig:gadget-ham-path}. We start with the graph $H(\varphi)$ and append two paths of length $m$ each to $x_1$ and $x_2$. We let $v, w$ denote the other endpoints of these paths. Furthermore, we add a path of length $m-2$ between $x_1$ and $x_2$. We let $k :=  3m-1$. This completes the description of the gadget $G(\varphi)$. We claim that $G(\varphi)$ has the desired properties.
    
    First, assume that $\varphi$ is satisfiable. Then clearly $G(\varphi)$ has a path of length $3m-1$ utilizing the Hamiltonian path of length $m-1$ in $H(\varphi)$. There cannot be a longer path, since for any path, the subpath inside $H(\varphi)$ can have length at most $m-1$. Hence $\lambda(G(\varphi)) = k$.
    
    Second, assume that $\varphi$ is not satisfiable. Then, every path fully contained in $H(\varphi)$ has length at most $m-2$. There exists a path of length $3m-2 = k-1$. It is possible to confirm with an easy case distinction that no longer path exists, i.e. $\lambda(G(\varphi)) = k-1$.

    Finally, consider a path $P$ of length $k$ in the yes-case, or of length $k-1$ in the no-case. It is again easy to see that $P$ must have one starting vertex at $v$ and the other at $w$.
 \end{proof}

\begin{lemma}\label{lem:npconp}
    \textsc{Gallai Vertex} is NP-hard and co-NP-hard.
\end{lemma}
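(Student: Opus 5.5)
We will combine \Cref{lem:3-paths} with the gadget of \Cref{lem:gadget-ham-path}: instead of attaching three plain paths of lengths $k_T,k_L,k_R$ to the Walther-Zamfirescu graph, we attach gadget graphs whose longest-path lengths encode satisfiability. Given a \textsc{3SAT} formula $\varphi$, compute $G(\varphi)$, $v,w$ and $k$ via \Cref{lem:gadget-ham-path}. We identify the endpoint $v$ of a copy of $G(\varphi)$ with an attachment vertex ($T$, $L$ or $R$) of the Walther-Zamfirescu graph. Since every longest path in $G(\varphi)$ runs from $v$ to $w$, a longest path of the gadget \enquote{hanging} at $v$ has length $\lambda(G(\varphi))$ and ends at $w$. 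It therefore plays the role of a pendant path of length $\lambda(G(\varphi))$, which is $k$ or $k-1$.

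For co-NP-hardness (reduction from unsatisfiability), attach copies of $G(\varphi)$ at $T$ and at $L$, and a plain path of length $k-1$ at $R$. If $\varphi$ is unsatisfiable, all three \enquote{arms} have longest length $k-1$. If it is satisfiable, the arm lengths are $(k,k,k-1)$. By the analogue of \Cref{lem:3-paths}, the resulting graph has no Gallai vertex if and only if $\varphi$ is unsatisfiable. For NP-hardness, instead attach $G(\varphi)$ at $T$ and plain paths of length $k$ at $L$ and $R$. If $\varphi$ is satisfiable, all arms have length $k$ and there is no Gallai vertex. Otherwise the arms are $(k-1,k,k)$ and a Gallai vertex exists. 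This is a reduction from \textsc{3SAT} to the complement of \textsc{Gallai Vertex}, so \textsc{Gallai Vertex} is co-NP-hard. The first construction similarly reduces \textsc{3SAT} to \textsc{Gallai Vertex}, giving NP-hardness. (We keep the two labels straight at the end.)

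The main step is to re-prove \Cref{lem:3-paths} with gadget arms in place of paths. First we argue that a longest path of the whole graph enters each arm at most once, through its attachment vertex, and otherwise lies in the core. Within an arm, the segment starting at the attachment vertex has length at most $\lambda(G(\varphi))$, with equality only for a longest gadget path from $v$ to $w$. A path could also lie entirely inside one arm, but this is harmless: for $k$ large relative to the core (which has constant size), any such path has length at most $\lambda(G(\varphi))\le k$, while paths using two arms have length about $2k-1+7$. So every longest path uses exactly two arms and the core, exactly as in \Cref{lem:3-paths}.

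\emph{Equal arms.} The rotations and reflections of $P_1,P_2$, extended by longest arm paths, omit every core vertex. For any vertex $x$ inside an arm, some longest path uses the other two arms and so omits that arm entirely. Hence there is no Gallai vertex.

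\emph{Unequal arms.} Take the attachment vertex of a strictly longest arm (say $T$ with arm lengths $(k,k-1,k-1)$, or one of the two long arms). We show that deleting it lowers the longest path length, reusing the argument of \Cref{lem:3-paths}. With lengths $(k,k,k-1)$, the longest paths have length $2k+7$ and must use both long arms. One might worry that the vertex $T$ alone does not hit all of them. It does: every such path passes through $T$, since it enters the $T$-arm only via $T$. Thus $T$ is a Gallai vertex. The case-checking that paths cannot gain length by weaving between core and arm relies on the arms being attached at cut vertices, and this is the part I expect to require the most care.
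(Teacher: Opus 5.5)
Your proposal is correct and follows essentially the paper's route. Each gadget from Lemma~\ref{lem:gadget-ham-path} is attached at a cut vertex, and all its longest paths run from $v$ to $w$, so it behaves like a pendant path of length $\lambda(G(\varphi))$ and Lemma~\ref{lem:3-paths} applies. One construction has all arm lengths equal exactly when $\varphi$ is unsatisfiable, which gives NP-hardness, and the other has them equal exactly when $\varphi$ is satisfiable, which gives co-NP-hardness. The differences are cosmetic: for NP-hardness you attach two copies of the gadget, where the paper attaches one gadget plus two paths of length $k-1$. Your paragraph headings also initially swap the NP and co-NP labels, but you correct this yourself by the end.
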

\begin{proof}

We use the Walther--Zamfirescu graph with attachment vertices $T,L,R$, as in Lemma~\ref{lem:3-paths}. We first claim the following.

\begin{claim}
\label{claim:gadgets-instead-of-paths}
Let $H_T,H_L,H_R$ be connected graphs with distinguished vertices $v_X,w_X \in V(H_X)$ for each label $X \in \{T,L,R\}$, and suppose that every longest path of $H_X$ has endpoints $v_X$ and $w_X$. For each $X \in \{T,L,R\}$, identify the vertex $v_X$ with the corresponding attachment vertex $X$ of the Walther--Zamfirescu graph, and let $k_X := \lambda(H_X)$.
Then the resulting graph has no Gallai vertex if $k_T = k_L = k_R$. If the values $k_T,k_L,k_R$ are not all equal, then the resulting graph has a Gallai vertex.
\end{claim}

\begin{proof}[Proof of claim]
Since $H_X$ meets the rest of the graph only in the vertex $v_X = X$, a simple path in the full graph can enter $H_X$ at most once. Hence its intersection with $H_X$ is a path in $H_X$, and therefore uses at most $k_X = \lambda(H_X)$ edges. Equality is possible only if this subpath is a longest path of $H_X$, in which case its endpoints are $v_X$ and $w_X$ by assumption. Thus, for the purpose of determining longest paths in the full graph, each $H_X$ behaves exactly like a pendant path of length $k_X$ attached at $X$. The conclusion now follows from Lemma~\ref{lem:3-paths}.
\end{proof}

\begin{figure}[h]
    \centering
    \includegraphics[width=0.55\textwidth,page=2]{img/walther-zamfirescu-paths.pdf}
    \caption{The instance used to show NP-hardness and co-NP-hardness}
    \label{fig:NP-coNP-hardness}
\end{figure}

We first prove NP-hardness. Let $\varphi$ be an instance of \textsc{3SAT}. By \cref{lem:gadget-ham-path} we can compute in polynomial time a connected graph $G(\varphi)$ with distinguished vertices $v,w$ and an integer $k$ such that
\begin{itemize}
    \item if $\varphi$ is satisfiable, then $\lambda(G(\varphi)) = k$;
    \item if $\varphi$ is not satisfiable, then $\lambda(G(\varphi)) = k-1$;
    \item in both cases, every longest path of $G(\varphi)$ has endpoints $v$ and $w$.
\end{itemize}

We now use the construction shown in Figure~\ref{fig:NP-coNP-hardness}. More precisely, we start with the graph $G(k_T, k_L, k_R)$ from  \cref{fig:WZ-paths} and replace the left pendant path by the gadget $G(\varphi)$, identifying the vertex $v_L$ with the vertex $v$ from the gadget $G(\varphi)$. For NP-hardness, we choose both the top and right pendant paths to have length $k_T = k_R := k-1$.
If $\varphi$ is satisfiable, then the three attached pieces have longest-path lengths $(k,k-1,k-1)$, which are not all equal. By the claim, the resulting graph has a Gallai vertex.
If $\varphi$ is not satisfiable, then the three attached pieces have longest-path lengths $(k-1,k-1,k-1)$. By the claim, the resulting graph has no Gallai vertex.
Hence
$
\varphi \in \textsc{3SAT}$ if and only if
$G_{\mathrm{NP}}(\varphi) \in \textsc{Gallai Vertex}.$
This is a polynomial-time many-one reduction from \textsc{3SAT} to \textsc{Gallai Vertex}.

For co-NP-hardness, we reduce from the co-NP-complete \textsc{UNSAT} problem. In the \textsc{UNSAT} problem, the input is a formula $\varphi$ in 3-conjunctive normal form.
The question is whether $\varphi$ is unsatisfiable.
We again use the construction shown in Figure~\ref{fig:NP-coNP-hardness}, but now we choose the top and right pendant paths to have length $k_T = k_R := k$.
If $\varphi$ is satisfiable, then the three attached pieces have longest-path lengths $(k,k,k)$, so by the claim, the resulting graph has no Gallai vertex.
If $\varphi$ is not satisfiable, then the three attached pieces have longest-path lengths $(k-1,k,k)$, which are not all equal. By the claim, the resulting graph has a Gallai vertex.
Thus
$
\varphi \in \textsc{UNSAT}
\iff
G_{\mathrm{coNP}}(\varphi) \in \textsc{Gallai Vertex}.
$
This is a polynomial-time many-one reduction from \textsc{UNSAT} to \textsc{Gallai Vertex}. Since \textsc{UNSAT} is co-NP-complete, it follows that \textsc{Gallai Vertex} is co-NP-hard.
This finishes the proof of~\Cref{lem:npconp}.
\end{proof}

\subsection{\textsc{Gallai Vertex} is $\Theta_2^p$-hard}

We now show how to extend the argument to $\Theta^p_2$-hardness. In order to do this, we make use of a general technique by Wagner to prove $\Theta^p_2$-hardness, which is nowadays standard \cite{wagner1987more}.  Wagner's technique can be explained as follows. Consider the problem \textsc{SAT parity}.

\begin{mybox}{\textsc{SAT Parity}}
\textbf{Input:} An even-length sequence $(\varphi_1,\dots,\varphi_{2n})$ of $\textsc{3SAT}$ formulas. We have the promise that there exists an index $s \in \set{0,\dots,2n}$ such that $\varphi_1,\dots,\varphi_s$ are satisfiable, and $\varphi_{s+1},\dots,\varphi_{2n}$ are not satisfiable.\\
\textbf{Question:} Is the index $s$ an odd number?
\end{mybox}

We call the index $s$ the \emph{split index}. Wagner proved: For some decision problem $\Pi$ (here, $\Pi = \textsc{Gallai Vertex}$), if there is a polynomial-time computable function $f$ that maps arbitrary instances $x$ of \textsc{SAT Parity} to instances $f(x)$ of $\Pi$ such that $x$ is a yes-instance if and only if $f(x)$ is a yes-instance, then $\Pi$ is $\Theta^p_2$-hard. 
Informally, this can be understood as a reduction from $\textsc{SAT Parity}$ to $\Pi$.
However, the reason this is not a reduction in the usual sense is that instances of the problem \textsc{SAT Parity} contain a promise, but it cannot be tested in polynomial time (unless P=NP) whether this promise holds. 
Hence \textsc{SAT Parity} is not a computational decision problem in the usual sense. (See \cite[Thm. 5.2]{wagner1987more}, or \cite{riege2006completeness} for a formally very precise statement.)

In the context of our paper, this means that we need to start with an instance $(\varphi_1,\dots,\varphi_{2n})$ of $\textsc{SAT parity}$, and transform it into a graph $G$, 
such $G$ has a Gallai vertex if the split index $s$ is odd, and $G$ does not have a Gallai vertex if the split index is even. How can we achieve such a feat? The main idea is to introduce so-called \emph{odd/even-gadgets} and show that they behave differently depending on whether the split index $s$ is even or odd. 
We remark that the value of $\lambda(G_\text{odd}),\lambda(G_\text{even})$ depends on $s$, but the truth of the equality \enquote{$\lambda(G_\text{odd}) = \lambda(G_\text{even})$?} depends only on the parity of $s$. 

\begin{lemma}
\label{lem:gadget-odd-even}
    Given an instance $(\varphi_1,\dots,\varphi_{2n})$ of \textsc{SAT Parity}, let $s \in \set{0,\dots,2n}$ denote its split index. One can construct in polynomial time two gadgets $G_\text{odd}$ and $G_\text{even}$ together with vertices $v_o, w_o \in V(G_\text{odd})$ and $v_e,w_e \in V(G_\text{even})$ such that
    \begin{itemize}
    \item If $s$ is even, then $\lambda(G_\text{odd}) = \lambda(G_\text{even})$.
        \item If $s$ is odd, then $\lambda(G_\text{odd}) > \lambda(G_\text{even})$.
    \item In both cases, every longest path in $G_\text{odd}$ ($G_\text{even}$, respectively) starts and ends at vertices $v_o,w_o$ ($v_e, w_e$, respectively).
    \end{itemize}
\end{lemma}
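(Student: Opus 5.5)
Apply \cref{lem:gadget-ham-path} to each formula and then glue the resulting gadgets in series, with weights chosen so that the parity of $s$ decides which of two chains is longer.

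\textbf{Gadgets.} For each $i \in [2n]$, apply \cref{lem:gadget-ham-path} to $\varphi_i$ to obtain $G(\varphi_i)$ with endpoints $v^i,w^i$ and integer $k_i$. Then $\lambda(G(\varphi_i)) = k_i - 1 + [\varphi_i \text{ satisfiable}]$, and every longest path runs from $v^i$ to $w^i$.

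\textbf{Series composition lemma.} If gadgets $H_1,\dots,H_r$ each have all longest paths running between their two terminals, and we glue them in series (identify $w$ of one with $v$ of the next, or join them by a connecting edge), then:
\begin{itemize}
\item the longest path of the result has length $\sum_j \lambda(H_j)$ (plus the connecting edges, if any);
\item every longest path runs from the first terminal to the last.
\end{itemize}
The argument is that the glued vertices are cut vertices. A path visiting a proper subset of the blocks, or only partially traversing a block, is strictly shorter, because each block contributes at most $\lambda(H_j)$ and equality forces a terminal-to-terminal traversal.

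\textbf{Encoding parity.} By the promise, $\varphi_i$ is satisfiable exactly for $i \le s$. Pair the formulas as $(\varphi_{2j-1},\varphi_{2j})$ for $j=1,\dots,n$. The number of odd indices $i \le s$ is $\lceil s/2\rceil$ and the number of even ones is $\lfloor s/2\rfloor$; these differ by exactly $1$ iff $s$ is odd and are equal iff $s$ is even. So we build:
\begin{itemize}
\item $G_\text{odd}$: the series composition of $G(\varphi_1),G(\varphi_3),\dots,G(\varphi_{2n-1})$;
\item $G_\text{even}$: the series composition of $G(\varphi_2),\dots,G(\varphi_{2n})$.
\end{itemize}
The constants must be equalized, since the $k_i$ differ. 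Pad every gadget by attaching a pendant path at its $w$-end so that all gadgets have a common base value $K = \max_i k_i$. The padded gadget has longest path $K-1+[\varphi_i\text{ sat}]$, and its endpoints become $v^i$ and the new pendant end; the endpoint property survives because the pendant path is attached at $w^i$, which every longest path of $G(\varphi_i)$ ends at. Then
\[
\lambda(G_\text{odd}) - \lambda(G_\text{even}) = \lceil s/2\rceil - \lfloor s/2\rfloor \in \{0,1\},
\]
which gives exactly the two bullet conditions. Set $v_o,w_o$ (resp.\ $v_e,w_e$) to be the outer terminals of the chains.

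\textbf{Main obstacle.} The step needing care is the composition lemma together with the padding. One must check that appending a path at $w^i$ cannot create a longest path that avoids $v^i$. Such a path would have its portion inside $G(\varphi_i)$ ending at $w^i$ but not being a maximum-length path, so it is strictly shorter than $\lambda(G(\varphi_i))$ plus the pendant length. That inequality is what preserves the endpoint property.

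Everything is polynomial: there are $2n$ calls to \cref{lem:gadget-ham-path}, and the paths added have polynomial length.
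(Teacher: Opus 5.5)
Your proposal is correct and follows essentially the same route as the paper. Like the paper, you pad each gadget from Lemma~\ref{lem:gadget-ham-path} with a path up to the common value $K=\max_i k_i$, chain the odd-indexed and the even-indexed gadgets in series by identifying terminals, and compare $\lceil s/2\rceil$ with $\lfloor s/2\rfloor$. Your cut-vertex argument for the series composition and your check that padding at $w^i$ preserves the endpoint property simply make explicit what the paper calls "clear".
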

\begin{proof}
    The main idea is to combine the gadget from \cref{lem:gadget-ham-path} multiple times with itself. 
    By the lemma, for each $i=1,\dots,2n$, we can compute a graph $G(\varphi_i)$ and some integer $k_i$ such that $\lambda(G(\varphi_i)) = k_i$, if $\varphi_i$ is satisfiable, and  $\lambda(G(\varphi_i)) = k_i - 1$ otherwise. 
    Let $k := \max_{i=1}^{2n}k_i$. 
    If some $k_i < k$, then we can simply extend the gadget $G(\varphi_i)$ with a path of length $k-k_i$, while keeping all its properties. 
    Hence we can w.l.o.g.\ assume that $k_i = k$ for all $i = 1,\dots, 2n$.
    \begin{figure}[htpb]
    \centering
    \includegraphics[width=0.9\textwidth]{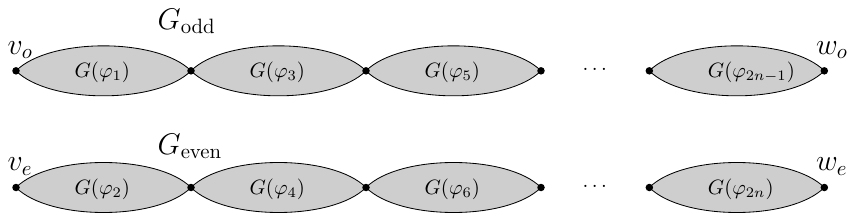}
    \caption{The gadgets $G_\text{odd}$ and $G_\text{even}$.}
    \label{fig:gadget-odd-even}
    \end{figure}
    Consider now \cref{fig:gadget-odd-even}. 
    The gadget $G_\text{odd}$ is defined as the concatenation of the odd-numbered gadgets $G(\varphi_1),G(\varphi_3),G(\varphi_5),\dots,G(\varphi_{2n-1})$ (by identifying vertex $w$ from the previous gadget with vertex $v$ from the next gadget). 
    The gadget $G_\text{even}$ is similarly defined as the concatenation of the even-numbered gadgets $G(\varphi_2),G(\varphi_4),G(\varphi_6),\dots,G(\varphi_{2n})$. 
    The vertices $v_o,w_o$ ($v_e, w_e$, respectively) are the leftmost/rightmost vertices of $G_\text{odd}$ ($G_\text{even}$, respectively). 
    It is clear from \cref{lem:gadget-ham-path} that every longest path in $G_\text{odd}$ has $v_o,w_o$ as its endpoints, and this is similarly true for $G_\text{even}$. What is the length of the longest path in $G_\text{odd}$ and $G_\text{even}$? 
    Note that this depends crucially on the split index $s$. This is because if formula $\varphi_i$ is satisfiable, the gadget $G(\varphi_i)$ contributes $k$ edges to the longest path, but if $\varphi_i$ is not satisfiable, it contributes only $k-1$. Hence the larger the split index $s$ is, the longer the longest path in $G_\text{odd}$ and $G_\text{even}$. 
    However, note that the precise value depends on whether $s$ is odd or even.
    \begin{itemize}
        \item If $s$ is even, then the $s$ satisfiable formulas $\varphi_1,\dots,\varphi_s$ are distributed evenly in $G_\text{odd}$ and $G_\text{even}$, hence
        \[
            \lambda(G_\text{odd}) = \lambda(G_\text{even}) =  \frac{s}{2}k + \frac{2n-s}{2}(k-1).
        \]
        \item However, if $s$ is odd, then out of the $s$ satisfiable formulas $\varphi_1,\dots,\varphi_s$, the gadget $G_\text{odd}$ contains one more than the gadget $G_\text{even}$, hence
        \[
        \lambda(G_\text{odd}) = \frac{s+1}{2}k + \frac{2n-s-1}{2}(k-1) > \lambda(G_\text{even}) = \frac{s-1}{2}k + \frac{2n-s+1}{2}(k-1).
        \]
    \end{itemize}
    The proof of~\Cref{lem:gadget-odd-even} follows.
\end{proof}

\begin{lemma}
\label{lem:theta-2-hard}
    \textsc{Gallai Vertex} is $\Theta^p_2$-hard.
\end{lemma}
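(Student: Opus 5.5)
We will use Wagner's technique: from an instance $(\varphi_1,\dots,\varphi_{2n})$ of \textsc{SAT Parity} with split index $s$, we build in polynomial time a graph $G$ that has a Gallai vertex if and only if $s$ is odd. First apply \cref{lem:gadget-odd-even} to obtain $G_\text{odd}$ and $G_\text{even}$ with their endpoint pairs. We then attach gadgets to the three attachment vertices $T,L,R$ of the Walther--Zamfirescu graph as in \cref{claim:gadgets-instead-of-paths}. We glue a copy of $G_\text{odd}$ at $T$ (identifying $v_o$ with $T$) and copies of $G_\text{even}$ at $L$ and at $R$ (identifying $v_e$ with $L$, resp.\ $R$). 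Using two disjoint copies of $G_\text{even}$ is harmless, since the construction remains polynomial.

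The hypotheses of \cref{claim:gadgets-instead-of-paths} hold. Each attached graph is connected: it is a concatenation of connected gadgets from \cref{lem:gadget-ham-path}. By the third item of \cref{lem:gadget-odd-even}, every longest path of each attached graph has endpoints $v_X$ and $w_X$. Setting $k_T:=\lambda(G_\text{odd})$ and $k_L=k_R:=\lambda(G_\text{even})$, the claim says that $G$ has a Gallai vertex exactly when $k_T,k_L,k_R$ are not all equal. By \cref{lem:gadget-odd-even}, if $s$ is even then $k_T=k_L=k_R$, so there is no Gallai vertex. If $s$ is odd then $k_T>k_L=k_R$, so there is one; indeed \cref{lem:3-paths} identifies $v_T$ as the Gallai vertex. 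Hence $s$ is odd if and only if $G\in\textsc{Gallai Vertex}$. By Wagner's theorem \cite[Thm.~5.2]{wagner1987more}, \textsc{Gallai Vertex} is $\Theta^p_2$-hard.

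The main obstacle is verifying that \cref{claim:gadgets-instead-of-paths}, and through it \cref{lem:3-paths}, really apply to general gadgets rather than pendant paths. Two points need checking. First, a longest path of the whole graph restricted to an attached gadget must realize $\lambda(H_X)$ with an endpoint at $X$; this relies on the longest-path endpoint property. Second, the Walther--Zamfirescu analysis must only use the lengths $k_X$, for instance that the longest path has length $k_T+k_L+7$ in the unequal case. The first point needs care where a path passes partially through a gadget. Any such subpath has at most $\lambda(H_X)$ edges, and since the gadget meets the rest of the graph in a single cut vertex, the whole-graph path length is bounded by the pendant-path case. For the Gallai-vertex direction we also need that a path avoiding $v_T$ has length at most $\max\{k_T-1,\,k_L+k_R+7\}$. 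Inside $H_T-v_T$ a path has fewer than $k_T$ edges, because any path of length $k_T$ in $H_T$ is longest and therefore uses $v_T$. Everything else is bookkeeping via the claim.
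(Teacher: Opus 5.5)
Your proposal is correct and matches the paper's proof: attach one copy of $G_\text{odd}$ and two copies of $G_\text{even}$ at $T,L,R$, use the claim from the NP/co-NP-hardness proof together with the odd/even gadget lemma to get a Gallai vertex exactly when $s$ is odd, and conclude via Wagner's theorem. Your extra paragraph checking that the gadgets behave like pendant paths is sound, though that claim already covers it.
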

\begin{proof}
    \begin{figure}[htpb]
        \centering
        \includegraphics[width=0.55\textwidth,page=3]{img/walther-zamfirescu-paths.pdf}
        \caption{The instance used to show $\Theta^p_2$-hardness.}
        \label{fig:theta-2-hardness}
    \end{figure}
    We reduce from \textsc{SAT Parity}. Given an instance $(\varphi_1,\dots,\varphi_{2n})$ of \textsc{SAT Parity}, we first compute the gadgets $G_\text{even}$ and $G_\text{odd}$ as in \cref{lem:gadget-odd-even}. Then we construct the graph $G$ as in \cref{fig:theta-2-hardness}.
    It is similar to the graph from \cref{fig:NP-coNP-hardness}, except that we now append some gadget to all three vertices $T,L,R$. We use the gadget $G_\text{even}$ twice, and the gadget $G_\text{odd}$ once. We can then consider the cases:
    \begin{itemize}
        \item If $s$ is even, then $\lambda(G_\text{even}) = \lambda(G_\text{odd})$, so in all three endings $T,L,R$, the longest path has the same length.
        Then by \cref{claim:gadgets-instead-of-paths}, the graph $G$ does not have a Gallai vertex. 
      \item If $s$ is odd, then $\lambda(G_\text{even}) < \lambda(G_\text{odd})$. Hence by \cref{claim:gadgets-instead-of-paths}, the graph $G$ has a Gallai vertex.
    \end{itemize}
    Note that for different numbers $s$, the values  $\lambda(G_\text{even}), \lambda(G_\text{odd})$ are different, but this does not influence correctness of the argument. We have shown that $G$ has a Gallai vertex if and only if $s$ is odd. This shows that \textsc{Gallai Vertex} is $\Theta^p_2$-hard, and hence completes the proof of~\Cref{lem:theta-2-hard}.
\end{proof}

\section{Strong Inapproximability}

In this section, we consider approximation algorithms to the longest path transversal problem. 
We assume that such an approximation algorithm runs in polynomial time, and upon receiving a graph $G$ outputs some number $\mathcal{A}(G) \geq \lpt(G)$. 
Its \emph{approximation ratio} is the worst-case value of the ratio $\mathcal{A}(G)/\lpt(G)$ over all graphs $G$.
We show that the longest path transversal number is \emph{strongly inapproximable}. We use the term strongly inapproximable, to denote the following behavior. 

\begin{theorem}
\label{thm:LPT-strong-inapproximable}
For all constants $C \ge 2$, if there exists some graph $H$ with $\lpt(H) = C$, then the computational problem to distinguish whether a given graph $G$ has $\lpt(G) \geq C$ or $\lpt(G) = 1$ is $\Theta^p_2$-hard (and in particular NP-hard).
\end{theorem}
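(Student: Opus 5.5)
The idea is to replace the Walther--Zamfirescu graph in the proof of \cref{lem:theta-2-hard} by the fixed graph $H$ with $\lpt(H)=C$, and to hang a gadget from \cref{lem:gadget-odd-even} at \emph{every} vertex of $H$. Since $\lpt$ is only considered for connected graphs, $H$ is connected, and as $C\ge 2$ we have $h:=|V(H)|\ge 2$. Because $H$ is a constant, the construction is polynomial.

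Given an instance of \textsc{SAT Parity}, build $G_\text{odd},G_\text{even}$ by \cref{lem:gadget-odd-even}. First amplify the gap. Let $G'_\text{odd}$ be the concatenation of $M := h+1$ copies of $G_\text{odd}$, glued $w$ to $v$ as in the proof of \cref{lem:gadget-odd-even}, and define $G'_\text{even}$ likewise. Pad both (at the end, or by lengthening the chain) so that their longest-path lengths exceed $h$. Then every longest path of each padded gadget still runs between its two end vertices, and:
\begin{itemize}
\item if $s$ is even, then $\lambda(G'_\text{odd})=\lambda(G'_\text{even})$;
\item if $s$ is odd, then $\lambda(G'_\text{odd})\ge \lambda(G'_\text{even})+h+1$.
\end{itemize}
Fix one vertex $u_0\in V(H)$. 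Attach a copy of $G'_\text{odd}$ at $u_0$ and a copy of $G'_\text{even}$ at every other vertex of $H$, in each case identifying the gadget's $v$-vertex with the vertex of $H$. Call the result $G$, and write $L_o,L_e$ for the two gadget lengths.

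The structural claim is analogous to \cref{claim:gadgets-instead-of-paths}. A path of $G$ meets each pendant gadget in a subpath ending at its attachment vertex, and it can enter at most two pendants. Its length is therefore at most (pendant part) $+$ (part inside $H$) $+$ (pendant part). The $H$-part is at most $\lambda(H)\le h-1$. A pendant part reaches the full gadget length only if it is a longest path of that gadget from $v$ to $w$.

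\emph{Even case.} Here $L_o=L_e=:L$. Any path of $H$ with distinct endpoints $a,b$ extends to a path of length $2L+|P|$, and since $L>h$, no path touching fewer than two full pendants can compete. Hence the longest paths of $G$ are exactly: a full longest path in the pendant at $a$, then a longest path of $H$ from $a$ to $b$, then a full longest path in the pendant at $b$. So $\lpt(G)\le C$, because a minimum transversal of $H$ hits all of them.

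For the lower bound, take a transversal $S$ of $G$. Replace each pendant vertex in $S$ by its attachment vertex; this gives a set $S'\subseteq V(H)$ with $|S'|\le|S|$. Every longest path of $H$ extends as above, and each pendant vertex on such an extension lies in the pendant of an endpoint, so $S'$ still hits every longest path of $H$. Thus $|S|\ge C$ and $\lpt(G)=C$.

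\emph{Odd case.} A path avoiding $u_0$ never enters the $G'_\text{odd}$ pendant, so its length is at most $2L_e+h-1$. On the other hand, since $H$ is connected and $h\ge2$, $u_0$ has a neighbour $b$. The path through the full odd pendant, the edge $u_0b$, and the full even pendant at $b$ has length $L_o+1+L_e\ge 2L_e+h+2$. Hence $u_0$ is a Gallai vertex and $\lpt(G)=1$.

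By Wagner's technique this yields $\Theta^p_2$-hardness of distinguishing $\lpt(G)=1$ from $\lpt(G)\ge C$. Indeed, $s$ odd gives $\lpt(G)=1$, a yes-instance of \textsc{Gallai Vertex}, while $s$ even gives $\lpt(G)=C\ge 2$.

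The step I expect to be the main obstacle is the even-case lower bound, namely that pendant vertices cannot make the transversal cheaper. The projection argument above handles this, but it needs care that \emph{every} longest path of $G$ has the stated form. This is exactly why the gadget lengths must dominate $h$, and why the amplification must keep the property that longest paths run from $v$ to $w$. The other point to check is that concatenating $M$ copies and padding preserves that endpoint property; this follows in the same way as in the proof of \cref{lem:gadget-odd-even}.
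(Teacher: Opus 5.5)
Your proof is correct and is essentially the paper's argument: hang $G_\text{odd}$ at one vertex of $H$ and $G_\text{even}$ at every other vertex. In the even case you use the correspondence between longest paths of $G$ and of $H$; your projection of pendant vertices onto their attachment vertices makes explicit the transfer of transversals that the paper only calls easy to see. In the odd case you compare lengths, as the paper does.

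The one real difference is how the odd case is won. The paper chooses the special vertex $x$ to be an endpoint of some longest path of $H$. Then the gap $\lambda(G_\text{odd})-\lambda(G_\text{even})=1$ already gives $\lambda(H)+\lambda(G_\text{odd})+\lambda(G_\text{even}) > \lambda(H)+2\lambda(G_\text{even})$, and no amplification is needed. You instead take $u_0$ arbitrary and pay for that by boosting the gap to $h+1$. Your padding to $L>h$ is harmless but unnecessary, since $L\ge 1$ already forces every longest path in the even case to contain two full pendants. Also, a path avoiding $u_0$ can lie entirely inside the odd pendant, contrary to your wording; such a path has length at most $L_o$, so your conclusion is unaffected.
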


Since Zamfirescu \cite{zamfirescu1976longest} found a graph $G$ with $\lpt(G) = 3$, this implies that $\lpt(G)$ cannot be approximated better than a factor of 3.
The theorem furthermore implies the following: Should the answer to \cref{ques:constantupper} be true, and there exists a (smallest) universal constant $U$ such that $\lpt(G) \leq U$ for all graphs, then the algorithm that always outputs $\mathcal{A}(G) = U$ trivially has an approximation ratio of $U$. 
On the other hand, by our theorem this bound is tight. 
Should the answer to \cref{ques:constantupper} be false, then our theorem implies that there cannot exist a constant-factor approximation algorithm (unless $\text{P} = \text{NP}$).

We are now ready to prove our theorem. We first give a short overview of the idea of the proof: The main idea is 
somewhat analogous to 
our $\Theta^p_2$-hardness proof of \cref{lem:theta-2-hard}. 
Observe that there, we took the Walther-Zamfirescu graph and attached gadgets to the three vertices $L,R,T$. 
These three vertices are exactly the vertices of the Walther-Zamfirescu graph that are endpoints of some longest path. 
Exactly one of the gadgets was a copy of $G_\text{odd}$, and all the remaining gadgets (in this case, two,) were copies of $G_\text{even}$.
The new proof works in the same basic way, except that we now swap the Walther-Zamfirescu graph for the graph $H$. 
We do not have any information about the structure of $H$, except that $\lpt(H) = C$, yet, 
this suffices for our argument. We attach a copy of $G_\text{even}$ to every vertex, except to one vertex, which must be the endpoint of some longest path, where we attach a copy of $G_\text{odd}$.
This way, we obtain a modified graph $H'$ from $H$. 
We show that for this new graph $H'$ it is hard to distinguish $\lpt(H') \geq C$ and $\lpt(H') = 1$.

\begin{proof}[Proof of \cref{thm:LPT-strong-inapproximable}]
\begin{figure}
    \centering
    \includegraphics[width=0.8\textwidth]{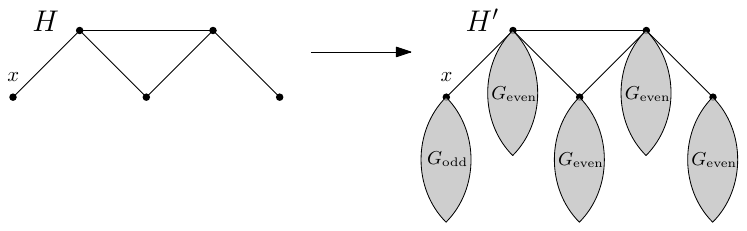}
    \caption{The instance used to show strong inapproximability of $\lpt(G)$.}
    \label{fig:LPT-inapproximability}
\end{figure}
    Assume there exists a constant $C$, and a graph $H$ with $\lpt(H) = C$. 
    Let $x \in V(H)$ be a vertex of $H$ such that at least one longest path of $H$ starts at $x$.
    We want to show the following:
    Given an instance $(\varphi_1,\dots,\varphi_{2n})$ of \textsc{SAT Parity}, we can compute in polynomial time a new graph $H'$ such that if the split index $s$ is even, then $\lpt(H') = C$, and if the split index $s$ is odd, then $\lpt(H') = 1$. This suffices to show the theorem.

    The graph $H'$ is sketched in \cref{fig:LPT-inapproximability} and defined as follows. First, we compute from the sequence $(\varphi_1,\dots,\varphi_{2n})$ in polynomial time gadgets $G_\text{odd}$ and $G_\text{even}$ as in \cref{lem:gadget-odd-even}.
    Then, we append a copy of gadget $G_\text{even}$ to all vertices $v \in V(H) \setminus \set{x}$ (identifying the vertices $v$ and $v_e$). 
    Finally, we append gadget $G_\text{odd}$ to vertex $x$ (identifying vertices $x$ and $v_o$). Let for $v \in V(H)$ denote $G_v$ the even/odd gadget attached to vertex $v$. This completes the description of $H'$. We can now do a case distinction on the split index.
    \begin{itemize}
        \item \textbf{If $s$ is even}: Then $\lambda(G_\text{odd}) = \lambda(G_\text{even})$ by \cref{lem:gadget-odd-even}. We will argue that this yields a one-to-one correspondence between longest paths in $H^\prime$ and longest paths in $H$, in a way that results in a one-to-one correspondence between their longest path traversals.  
        What are the longest paths in $H'$? Since the gadgets are connected to the rest of $H'$ via cut-vertices, and due to the properties of the gadgets, we conclude that 
        \[
        \lambda(H') = \lambda(H) + 2\lambda(G_\text{even}).
        \]
        Furthermore, let $\mathcal{L}(H')$ be the set of all longest paths in $H'$. Then it is straightforward to check that the following two sets are equal
        \begin{align*}
        \mathcal{L}(H') = \{P_1 P_2 P_3 : \exists a,b \in V(H) : P_2 \text{ is a $a$-$b$-path of length $\lambda(H)$ in $H$,} \\
        \text{$P_1$ is a longest path in $G_a$, $P_3$ is a longest path in $G_b$}\}.
        \end{align*}
        Here $P_1P_2P_3$ denotes the concatenation of paths $P_1,P_2,P_3$. From this characterization of $\mathcal{L}(H')$ it is easy to see that any longest path transversal of $H'$ can be turned into one of $H$ and vice-versa. Hence $\lpt(H') = \lpt(H) = C$.

        \item \textbf{If $s$ is odd}: Then $\lambda(G_\text{odd}) > \lambda(G_\text{even})$ by \cref{lem:gadget-odd-even}. Then, using a similar reasoning, we see that the longest path in $H'$ has length $\lambda(H) + \lambda(G_\text{odd}) + \lambda(G_\text{even})$. Here we use the fact that there is only a single copy of the gadget $G_\text{odd}$ attached to vertex $x$, and that there exists at least one longest path in $H$ starting at $x$.
        Then every longest path in $H'$ uses the vertex $x$, hence $\lpt(H') = 1$.
    \end{itemize}
    We have shown that from a given instance of $\textsc{SAT Parity}$, we can in polynomial time construct a graph $H'$ such that $\lpt(H') = C$ if $s$ is even, and $\lpt(H') = 1$ otherwise. Note that the algorithm runs in polynomial time, since $H,C,x$ are fixed constants in this context and need not be computed. This completes the proof of~\Cref{thm:LPT-strong-inapproximable}.
\end{proof}

As a corollary, we obtain that the longest path transversal number is strongly inapproximable, even when restricted to planar graphs only. 

\begin{corollary}
\label{cor:LPT_planar}
   For all constants $C \ge 2$, if there exists some planar graph $H$ with $\lpt(H) = C$, then the computational problem to distinguish whether a given planar graph $G$ has $\lpt(G) \geq C$ or $\lpt(G) = 1$ is $\Theta^p_2$-hard.
\end{corollary}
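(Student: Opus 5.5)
The plan is to rerun the proof of \cref{thm:LPT-strong-inapproximable} unchanged, and to check that every graph it builds is planar whenever $H$ is planar. Fix a planar $H$ with $\lpt(H)=C$ and a vertex $x$ at which some longest path of $H$ starts. Given an instance $(\varphi_1,\dots,\varphi_{2n})$ of \textsc{SAT Parity}, build $H'$ exactly as before: attach $G_\text{odd}$ at $x$ and a copy of $G_\text{even}$ at every other vertex. The case analysis in that proof never used anything about the gadgets beyond \cref{lem:gadget-odd-even}, so it carries over verbatim. We get $\lpt(H')=C$ when $s$ is even and $\lpt(H')=1$ when $s$ is odd. The only remaining task is to make sure $H'$ is planar.

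Each gadget is glued to $H$ at a single cut vertex. Each gadget is itself a chain of copies of $G(\varphi_i)$, consecutive copies sharing one cut vertex. Gluing planar graphs at single vertices preserves planarity, since every block of the result is a block of one of the pieces. So $H'$ is planar as soon as every $G(\varphi_i)$ is planar. In $G(\varphi)$ from \cref{lem:gadget-ham-path}, the pendant paths to $v$ and $w$ are harmless. The extra $x_1$--$x_2$ path of length $m-2$ is planar-compatible provided $x_1,x_2$ lie on a common face of a planar embedding of $H(\varphi)$.

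The main obstacle is therefore to replace Karp's $H(\varphi)$ with a planar graph having the same guarantee. That guarantee is: a Hamiltonian $x_1$--$x_2$ path exists if $\varphi$ is satisfiable, no Hamiltonian path at all exists otherwise, and $x_1,x_2$ lie on a common face. My first attempt would be the Garey--Johnson--Tarjan reduction (Hamiltonicity is NP-complete for planar cubic $3$-connected graphs), which starts from planar 3SAT; that source problem is still NP-hard.

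The plan is to take a planar graph $Q(\varphi)$ with an edge $e=ab$ such that $Q(\varphi)$ is Hamiltonian if and only if $\varphi$ is satisfiable, and every Hamiltonian cycle uses $e$. Such forced edges are standard in these gadget constructions, and if needed one can be obtained by subdividing a suitable edge. Deleting $e$ gives $H(\varphi)$ with $x_1=a$ and $x_2=b$ on a common face. A Hamiltonian $a$--$b$ path exists exactly when $\varphi$ is satisfiable. We must also exclude Hamiltonian paths with other endpoints in the unsatisfiable case. Pendant vertices would do this, but they would break the $x_1$--$x_2$ structure. Instead, I would note that the proof of \cref{lem:gadget-ham-path} only needs that no path inside $H(\varphi)$ has length $m-1$. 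For this I would use planar \emph{Hamiltonian path between two specified vertices}, which is NP-complete for planar graphs, and argue the needed strengthening through the same gadget analysis. This step may need citing a known planar variant rather than proving it from scratch.

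Finally, since the reduction from \textsc{SAT Parity} via Wagner's technique is formally unchanged, planar \textsc{SAT Parity}-style inputs are not needed. Planar 3SAT is NP-complete, so Wagner's theorem applies with the planar NP-complete problem used inside the gadgets. This gives $\Theta^p_2$-hardness of distinguishing $\lpt(G)\ge C$ from $\lpt(G)=1$ on planar graphs.
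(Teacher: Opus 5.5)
Your overall route matches the paper's. You rerun the proof of \cref{thm:LPT-strong-inapproximable} with planar gadgets, and $H'$ stays planar because every gluing is at a cut vertex; your block argument justifies this more cleanly than the paper does. The paper simply cites Garey--Johnson--Tarjan for planar Hamiltonian path with endpoints $x_1,x_2$ on a common face. Deriving this by deleting a forced edge, $H(\varphi)=Q(\varphi)-e$, works if $e$ is taken from the Tutte fragments in their construction. Drop the fallback of ``subdividing a suitable edge'', though. After subdividing $e$, the graph is Hamiltonian iff $Q(\varphi)$ has a Hamiltonian cycle through $e$, and that matches satisfiability only if $e$ was already forced.

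The step that does not close as written is the one you leave open: excluding Hamiltonian paths of $H(\varphi)$ with other endpoints when $\varphi$ is unsatisfiable. You only gesture at how to get this, but it is not needed, and the proof closes once you see why. The paper's proof of \cref{lem:gadget-ham-path} phrases this case via Karp's stronger guarantee, but its case distinction uses only the $x_1$--$x_2$ version. Concretely, let $E$ be the added $x_1$--$x_2$ path, so $H(\varphi)\cup E$ has $2m-3$ vertices. Each pendant path meets the rest only at its attachment vertex. Hence a path of $G(\varphi)$ that does not meet both pendant paths outside $x_1,x_2$ has length at most $m+(2m-4)=3m-4$. A path that does meet both consists of three pieces: a segment of the $v$-pendant path ending at $x_1$, an $x_1$--$x_2$ path that is either $E$ or lies entirely inside $H(\varphi)$, and a segment of the $w$-pendant path. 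Hence $\lambda(G(\varphi))=2m+\max\{m-2,\ell\}\ge 3m-2$, where $\ell$ is the length of a longest $x_1$--$x_2$ path in $H(\varphi)$. Moreover, every longest path uses both pendant paths completely, i.e.\ runs from $v$ to $w$. So \cref{lem:gadget-ham-path} only requires that $H(\varphi)$ has a Hamiltonian $x_1$--$x_2$ path iff $\varphi$ is satisfiable, which is exactly what deleting a genuinely forced edge gives. Hamiltonian paths with other endpoints are irrelevant.
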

\begin{proof}
    It is known that the Hamiltonian path problem is NP-complete for planar graphs \cite{garey1976planar}, even when the start- and endpoint $x_1,x_2$ of the Hamiltonian path are required to lie in the same face $f$. As a general fact about planar graphs, we can w.l.o.g.\ assume that $f$ is the outer face. 
    Due to this, we can assume that the gadget $G(\varphi)$ from \cref{lem:gadget-ham-path} as well as the
    the gadgets $G_\text{odd}$ and $G_\text{even}$ from \cref{lem:gadget-odd-even} are planar, without losing any of the properties described in these lemmas. Then, we can repeat the same proof as in \cref{thm:LPT-strong-inapproximable}. Note that if $H$ is planar, so is $H'$. The rest of the argument is completely analogous to \cref{thm:LPT-strong-inapproximable}.
\end{proof}

Since planar graphs $G$ with $\lpt(G) = 2$ and $\lpt(G) = 3$ were found by Walther, Schmitz, and Zamfirescu (\cite{schmitz1975langste,zamfirescu1975graphen}, see also \cite{shabbir2013intersecting}), this implies immediately:

\begin{corollary}
\label{cor:gallai-vertex-planar}
   \textsc{Gallai Vertex} is $\Theta^p_2$-complete, even when restricted to planar graphs.
\end{corollary}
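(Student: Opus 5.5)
The corollary has two halves: membership and hardness. Membership in $\Theta^p_2$ is inherited from \cref{lem:containment}, since restricting the inputs to planar graphs cannot increase complexity. So the only work is $\Theta^p_2$-hardness on planar inputs.

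For hardness I would rerun the reduction in the proof of \cref{thm:LPT-strong-inapproximable}, taking as $H$ a fixed planar graph with $\lpt(H)=2$. Such a graph exists by the planar constructions of Walther, Schmitz and Zamfirescu. Given a \textsc{SAT Parity} instance, I build the planar gadgets $G_\text{odd}$ and $G_\text{even}$ as in the proof of \cref{cor:LPT_planar}, with Hamiltonian path endpoints on the outer face. I attach $G_\text{odd}$ at a vertex $x$ that is an endpoint of some longest path of $H$, and a copy of $G_\text{even}$ at every other vertex. The resulting graph $H'$ is planar for two reasons:
\begin{itemize}
\item each gadget is glued at a single cut vertex;
\item each gadget is planar with its attachment vertex on the outer face, so it can be drawn inside a face incident to that vertex.
\end{itemize}

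The case analysis in the proof of \cref{thm:LPT-strong-inapproximable} then gives $\lpt(H')=\lpt(H)=2$ when $s$ is even, so $H'$ has no Gallai vertex. When $s$ is odd it gives $\lpt(H')=1$, so $x$ is a Gallai vertex. Thus $H'$ is a yes-instance of \textsc{Gallai Vertex} if and only if $s$ is odd. By Wagner's technique, as applied in \cref{lem:theta-2-hard}, this yields $\Theta^p_2$-hardness on planar graphs. Equivalently, one can apply \cref{cor:LPT_planar} with $C=2$: distinguishing $\lpt\ge 2$ from $\lpt=1$ on planar graphs is exactly deciding \textsc{Gallai Vertex} there.

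The main point to check is that the $s$-even case really gives $\lpt(H') = C$ and not something smaller. This relies on the one-to-one correspondence of longest paths through the endpoint structure of the gadgets, which \cref{thm:LPT-strong-inapproximable} already establishes. The planarity of the gadgets is the only new ingredient, and it is inherited from \cref{cor:LPT_planar}.
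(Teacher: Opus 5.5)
Your proposal is correct and follows the paper's route: membership comes from \cref{lem:containment}, and hardness is \cref{cor:LPT_planar} instantiated with $C=2$, using the planar graph with $\lpt=2$ due to Walther, Schmitz and Zamfirescu. The only difference is that you unpack that corollary by rerunning the reduction of \cref{thm:LPT-strong-inapproximable} with planar gadgets and spelling out why $H'$ is planar, whereas the paper simply invokes the corollary.
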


\section{Longest Cycle Transversal}

In this section, we show that the results from the previous section about the $\Theta^p_2$-completeness and strong inapproximability of the longest path transversal problem carry over with slight adaptations to the longest cycle transversal problem. Let $G$ be a graph. The \emph{longest cycle transversal number} of $G$, denoted by $\lct(G)$, is the minimum size of a longest cycle transversal of $G$, that is, a subset of vertices $S \subseteq V(G)$ such that every longest cycle in $G$ contains at least one vertex from $S$. We denote by $\gamma(G)$ the length of longest cycle in $G$.
In contrast to the situation for the longest path transversal, there are examples of connected $n$-vertex graphs where every longest cycle transversal must have size $\Theta(n)$ (see~\cite{shabbir2013intersecting, zamfirescu2001intersecting}).

\begin{theorem}
\label{thm:LCT-strong-inapproximable}
For all constants $C \ge 2$, the computational problem to distinguish whether a given graph $G$ has $\lct(G) \geq C$ or $\lct(G) = 1$ is $\Theta^p_2$-hard.
\end{theorem}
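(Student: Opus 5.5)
Note that, unlike \cref{thm:LPT-strong-inapproximable}, there is no hypothesis here: graphs with $\lct$ arbitrarily large are known to exist. I would start from an explicit $2$-connected graph $H$ with $\lct(H) = C$. For example, one could take a family of cycles where every two longest cycles must share only few vertices, or use the known examples with $\lct = \Theta(n)$ and prune to exactly $C$. Alternatively, I would take $H$ as $C$ vertex-disjoint copies of a cycle $C_m$. This has $\lct(H) = C$ but is disconnected; if connectivity is required, I would join the copies by long "dummy" structure that cannot lie on any longest cycle, e.g.\ attach them along a path via cut-vertices, so that every longest cycle is one of the $C$ copies. Then I would mimic the proof of \cref{thm:LPT-strong-inapproximable}, but with gadgets adapted to cycles. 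Pendant gadgets cannot be used, since cut-vertices kill cycles. Instead, I would replace edges by "two-terminal" gadgets.

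Concretely, from \cref{lem:gadget-odd-even} I would build $G_\text{odd}$ and $G_\text{even}$ with terminals $v,w$ such that every longest $v$–$w$ path has length $\lambda(G_\text{odd})$ resp.\ $\lambda(G_\text{even})$. The underlying gadget of \cref{lem:gadget-ham-path} already has all longest paths going from $v$ to $w$. The key lemma would be: if a gadget $G_X$ is inserted in place of an edge $e=ab$ (identify $a=v$, $b=w$), then any cycle through the gadget enters at $a$ and leaves at $b$, so it traverses a $v$–$w$ path of $G_X$. The maximum is the longest $v$–$w$ path length, and longest cycles through $G_X$ correspond exactly to longest $v$–$w$ paths. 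In the $C$-copies construction, I pick one edge in each copy of $C_m$. In copy $1$ I replace it by $G_\text{odd}$; in copies $2,\dots,C$ I replace it by $G_\text{even}$. Cycles lying entirely inside a gadget have length at most about $m_{\text{gadget}}$, so I would pad the cycle length $m$ (or add long paths inside the copies) to make sure the big cycles dominate.

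If $s$ is even, all $C$ modified copies have equal longest cycle length $m-1+\lambda(G_\text{even})$. They are vertex-disjoint apart from connecting structure, so $\lct \ge C$. If $s$ is odd, only copy $1$ attains the maximum, and every longest cycle passes through the terminal $a$ of that copy, so $\lct = 1$. This reduces \textsc{SAT Parity} as in \cref{lem:theta-2-hard}. The same construction with $C=2$ gives $\Theta^p_2$-hardness of the single-vertex version, and membership follows as in \cref{lem:containment} using a \textsc{Cycle} oracle on $G$ and $G-v$.

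The main obstacle is ensuring that no longest cycle is spurious. We must rule out cycles wholly within a gadget and cycles wandering through the connecting structure, and we must check that the longest $v$–$w$ path in the gadgets of \cref{lem:gadget-ham-path} equals its overall longest path. The latter holds since those paths end at $v,w$. For the former, I would choose $m$ larger than the gadget size. For $2$-connected (and planar) versions, I would connect the copies by pairs of long but not-too-long paths, which requires balancing lengths carefully.
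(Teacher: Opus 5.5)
Your proof is correct, but it takes a different route from the paper's.

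\textbf{The paper's route.} The paper takes a host graph $H$ with $\lct(H)\ge C$ from the literature and picks an edge $e_0$ on some longest cycle. It then replaces \emph{every} edge of $H$ by a two-terminal gadget: $G_\text{odd}$ on $e_0$, and $G_\text{even}$ everywhere else. Replacing all edges is essential there. When $s$ is even, every cycle of $H$ is then stretched by the same factor $\lambda(G_\text{even})$, so the longest cycles of $H'$ correspond exactly to the longest cycles of $H$.

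\textbf{Your route.} You build the host explicitly as $C$ vertex-disjoint cycles joined by connecting paths, and you substitute only one edge per cycle. Every cycle of $H'$ not inside a gadget lies in a single copy and passes through exactly one gadget. So no uniform rescaling is needed. In the even case you get $C$ pairwise vertex-disjoint families of longest cycles, hence $\lct(H')=C$. In the odd case every longest cycle is forced through the terminal of copy~1.

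\textbf{What each approach buys.} Your version is self-contained: it needs no appeal to the known graphs with $\lct=\Theta(n)$, and your first suggestion to ``prune'' those is unnecessary. Your padding of $m$ is also more than needed. A cycle of length $\ell$ inside a gadget contains a path of length $\ell-1\le\lambda$, so $m\ge 3$ already makes the big cycles dominate. The paper's all-edges substitution, on the other hand, works for an arbitrary host graph. That is exactly what lets it transfer verbatim to Theorem~\ref{thm:LCT-strong-inapproximable-2con}.

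\textbf{Caveat on your $2$-connected sketch.} Your closing idea for the $2$-connected case would not work as sketched. In a $2$-connected graph any two longest cycles intersect, so joining the copies by pairs of paths cannot preserve $C$ disjoint longest cycles. There you genuinely need a $2$-connected host with $\lct=C$ together with the paper's substitution of every edge.
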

\begin{proof}
    \begin{figure}[b]
        \centering
        \includegraphics[width=1\textwidth]{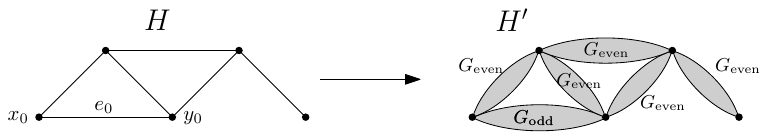}

        \caption{The instance used to show strong inapproximability of $\lct(G)$.}
        \label{fig:lct-strongly-inapproximable}
    \end{figure}
    By \cite{shabbir2013intersecting, zamfirescu2001intersecting}, there exists a graph $H$ with $\lct(H) \ge C$
    . Note that $H$ has a cycle. Let $e_0=x_0y_0$ be some edge of $H$ used by at least one longest cycle.
    Given an instance $(\varphi_1,\dots,\varphi_{2n})$ of $\textsc{SAT Parity}$, we can in polynomial time construct a new graph $H'$ such that $\lct(H') \geq C$ if the split index $s$ is even, and $\lct(H') = 1$ otherwise (see \cref{fig:lct-strongly-inapproximable}).
    We construct $H'$ as follows: From $(\varphi_1,\dots,\varphi_{2n})$, compute the gadgets $G_\text{odd}, G_\text{even}$ in polynomial time. Replace every edge $e \in E(H) \setminus \set{e_0}$ by a copy of the gadget $G_\text{even}$, and the edge $e_0$ by a copy of the gadget $G_\text{odd}$.
    We now consider a case distinction on the split index $s$:
    \begin{itemize}
        \item \textbf{If $s$ is even}: Then $\lambda(G_\text{odd}) = \lambda(G_\text{even})$ by \cref{lem:gadget-odd-even}. 
        Now, cycles of length $L$ in $H$ give rise to cycles of length $L\cdot\lambda(G_{\text{even}})$ in $H'$. What are the longest cycles in $H'$? 
        First observe that if a cycle is contained fully inside some gadget, then it has length at most $\lambda(G_\text{odd})$ or $\lambda(G_\text{even})$, so it cannot be a longest cycle. 
        Any cycle that intersects some gadget and is not contained said gadget must use both vertices $v_e$ and $w_e$ (or both $v_o$ and $w_o$). 
        A longest cycle will use exactly $\lambda(G_\text{even})$ edges from any gadget it visits.
        Hence
        \[
        \gamma(H') = \gamma(H)\lambda(G_\text{even}).
        \]
        Furthermore, let $\mathcal{C}(H')$ be the set of all longest cycles in $H'$. For an edge $e_i\in H$, let $G_{e_i}$ be the (even or odd!) gadget replacing $e_i$. Then, the following two sets are equal:
        \begin{align*}
        \mathcal{C}(H') = \{P_{e_1}P_{e_2}\ldots P_{e_t} \mid \exists e_1,\dots, e_t \in E(H) \text{ forming a longest cycle in $H$,} \\
        \text{$P_{e_i}$ is a longest path in $G_{e_i}$ for all $i=1,\dots,t$}\}.
        \end{align*}
        It follows from this characterization of $\mathcal{C}(H')$ that $\lct(H') = \lct(H) = C$.
        
        \item \textbf{If $s$ is odd}: Then $\lambda(G_\text{odd}) > \lambda(G_\text{even})$ by \cref{lem:gadget-odd-even}. Then, using a similar reasoning, we see that the longest cycle in $H'$ has length \[
        (\gamma(H) - 1)\lambda(G_\text{even}) + \lambda(G_\text{odd}). 
        \] 
        Here we use the fact that there is only a single copy of the gadget $G_\text{odd}$, which replaced the edge $e_0$, and that there exists at least one longest cycle in $H$ using $e_0$. Now, all longest cycles using $e_0$ in $H$ lift to cycles in $H'$ that are just one longer than those cycles that did not use $e_0$.
        Thus, now, every longest cycle in $H'$ uses the vertex $x_0$, hence $\lct(H') = 1$. 
    \end{itemize}
        So, from a given instance of $\textsc{SAT Parity}$, one can in polynomial time construct a graph $H'$ such that $\lct(H') = C$ if $s$ is even, and $\lct(H') = 1$ otherwise. This completes the proof. 
\end{proof}

We have now seen that for general graphs $G$, the longest cycle transversal number $\lct(G)$ can not be approximated to a constant factor in polynomial time. 
However, note that in the above, the graphs may contain \emph{cutvertices}, i.e. vertices whose removal would disconnect the graph.
We say a graph is \textit{$\mathit{2}$-connected} if it contains no cutvertices.

The notion of $2$-connectivity has seen much attention in the study of longest cycle transversals,
as this is a more natural setting: unlike when the graph contains cutvertices, in a $2$-connected graph, any two longest cycles must intersect.
As a final result, we apply our main idea to $2$-connected graphs. We can even further assume the graph to be planar.

\begin{theorem}
\label{thm:LCT-strong-inapproximable-2con}
For all constants $C \ge 2$, if there exists a planar, 2-connected graph $H$ with $\lct(H) = C$, then the computational problem to distinguish if some given planar, 2-connected graph $G$ has $\lct(G) \geq C$ or $\lct(G) = 1$ is $\Theta^p_2$-hard.
\end{theorem}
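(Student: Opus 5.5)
The plan is to repeat the edge-replacement proof of \cref{thm:LCT-strong-inapproximable}, modifying the gadgets so that $H'$ stays planar and $2$-connected. Let $H$ be planar, $2$-connected with $\lct(H)=C$, and fix an edge $e_0=x_0y_0$ lying on some longest cycle of $H$. Given $(\varphi_1,\dots,\varphi_{2n})$, I build $G_\text{odd}, G_\text{even}$ as in \cref{lem:gadget-odd-even}. For the base gadgets I use the planar version of \cref{lem:gadget-ham-path} from \cref{cor:LPT_planar}, with $x_1,x_2$ on the outer face, so that $v,w$ and all junction vertices lie on the outer face.

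Two features break $2$-connectivity. First, the concatenation points inside $G_\text{odd}$ and $G_\text{even}$ are cut vertices. Second, $H(\varphi)$ may have parts hanging off single vertices. For the first, I add to each of $G_\text{odd}, G_\text{even}$ a \emph{bypass edge} $v_ow_o$ (resp.\ $v_ew_e$), drawn in the outer face, so planarity is kept. Once the bypass edge is present, removing a junction vertex leaves both sides attached to $v$ or $w$, which the bypass joins, so junctions are no longer cut vertices. For the second, I would use a planar Hamiltonian-path reduction in which $H(\varphi)$ plus the edge $x_1x_2$ is $2$-connected. Planar Hamiltonicity reductions produce $3$-connected cubic planar graphs, so this should be obtainable by a standard modification. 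Together with the pendant paths $v$–$x_1$ and $x_2$–$w$, the $x_1$–$x_2$ path of length $m-2$, and the bypass edges, each gadget with its terminals joined becomes $2$-connected. Replacing the edges of the $2$-connected graph $H$ by such gadgets then yields a $2$-connected planar graph $H'$; I would verify this by a short block argument.

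For correctness, I then establish the analogue of the cycle characterisation. The quantity that now matters is $\mu(D)$, the length of the longest $v$–$w$ path in a gadget $D$. By the proof of \cref{lem:gadget-ham-path}, every longest path of $D$ runs from $v$ to $w$, so $\mu(D)=\lambda(D)$ before the bypass is added. The bypass adds only the single edge $vw$ as an alternative $v$–$w$ route, and it does not create longer $v$–$w$ paths. A cycle of $H'$ either lies inside a single gadget, or it crosses gadgets as $v$–$w$ traversals following a cycle of $H$.
\begin{itemize}
\item A cycle inside one gadget has length at most $\lambda(D)+1$.
\item A crossing cycle that uses a real path through every gadget on a longest cycle of $H$ has length at least $3\min\lambda(D)$, since $\gamma(H)\ge 3$.
\item A crossing cycle that takes a bypass edge instead of a full traversal is strictly shorter than the one using a longest path there.
\end{itemize}
Hence the longest cycles of $H'$ are exactly the lifts of longest cycles of $H$ that use longest $v$–$w$ paths in every gadget, as in \cref{thm:LCT-strong-inapproximable}. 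If $s$ is even, this gives $\lct(H')=\lct(H)=C$. If $s$ is odd, every longest cycle of $H'$ passes through the unique $G_\text{odd}$ copy, hence through $x_0$, giving $\lct(H')=1$.

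The main obstacle is the $2$-connectivity of the gadget interiors, namely guaranteeing that $H(\varphi)\cup\{x_1x_2\}$ is $2$-connected while keeping planarity and the property that there is a Hamiltonian $x_1$–$x_2$ path if $\varphi$ is satisfiable and no Hamiltonian path otherwise. If Karp's or Garey–Johnson–Tarjan's construction does not directly provide this, I would first try to pad dangling parts with extra planar connections to the $x_1$–$x_2$ path. I would then re-verify the case analysis bounding longest paths by $k-1$ in the unsatisfiable case, which is where extra edges could create unintended long paths.
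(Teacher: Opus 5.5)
Your proposal follows the paper's proof. Both use planar, 2-connected Hamiltonian-path instances with $x_1,x_2$ on a common face, substitute gadgets for the edges of $H$, argue 2-connectivity via each gadget with its terminals joined, and then rerun the cycle analysis of \cref{thm:LCT-strong-inapproximable}. The paper simply cites Garey--Johnson--Tarjan for the planar, 2-connected instances, so your padding fallback is not needed. The only difference is that you add the bypass edges $vw$ to $H'$ as real edges, whereas the paper uses $v_ew_e$ only as a proof device: in a 2-connected host, the rest of $H$ already joins the terminals. Your extra edges are therefore unnecessary but harmless, and you correctly check that they never lie on a longest cycle.
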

\begin{proof}
    We adapt \cref{thm:LCT-strong-inapproximable}. It is known that the Hamiltonian path problem is NP-complete even for planar, $2$-connected graphs, where the start- and endpoint $x_1,x_2$ share a face \cite{garey1976planar}. 
    Thus, we can construct the gadgets $G_\text{even}$ (and $G_\text{odd}$) using the reduction in \cite{garey1976planar}.   
    Then, observe that the graph $G'_\text{even}$ 
    constructed by taking the  gadget $G_\text{even}$ and 
    connecting $v_e$ and $w_e$ by a dummy edge, 
    is again planar and $2$-connected. (Analogously for $G_\text{odd}$.) 
    Now, consider the process of taking a $2$-connected graph, and substituting edges one by one by either $G_\text{even}$ or $G_\text{odd}$. If such a substitution were to introduce a cutvertex, that cutvertex would also correspond to a cutvertex of $G'_\text{even}$ or $G'_\text{odd}$. Thus, by the end of this process, the resulting graph is again $2$-connected.
    The rest of the proof is analogous to \cref{thm:LCT-strong-inapproximable}.
\end{proof}

Since Zamfirescu \cite{zamfirescu1975graphen} found a 2-connected planar graph with $\lct(H) = 3$ (with 914 vertices) it follows that $\lct(G)$ on 2-connected planar graphs is hard to approximate better than a factor of 3. This also implies our final corollary. (Note that containment in $\Theta^p_2$ is analogous to \cref{lem:containment}.)

\begin{corollary}\label{cor:longest-cycle-singleton-theta2-complete}
The problem of determining whether some graph has a single vertex hitting all longest cycles is $\Theta^p_2$-complete, even for planar and 2-connected input graphs.
\end{corollary}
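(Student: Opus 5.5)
The statement has two parts, containment and hardness; I would prove them separately.

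\textbf{Containment.} I would mimic \cref{lem:containment}. Define the NP problem \textsc{Cycle}: given $(G,k)$, does $G$ contain a cycle of length at least $k$? Given $G$ on $n$ vertices, write down in parallel the $O(n^2)$ queries $\textsc{Cycle}(G,k)$ and $\textsc{Cycle}(G-v,k)$ for all $k\le n$ and all $v\in V(G)$. From the answers we read off $\gamma(G)$ and every $\gamma(G-v)$. A vertex $v$ hits all longest cycles if and only if $\gamma(G-v)<\gamma(G)$. Hence the problem lies in $\text{P}_\parallel^\text{NP}=\Theta^p_2$. One degenerate case must be fixed by convention: an acyclic input, where no cycle exists. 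The restriction to planar 2-connected inputs only makes the problem easier to place in the class.

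\textbf{Hardness.} I would apply \cref{thm:LCT-strong-inapproximable-2con} with $C=2$, or with $C=3$ using Zamfirescu's 2-connected planar graph $H$ with $\lct(H)=3$ on 914 vertices. Its proof maps a \textsc{SAT Parity} instance in polynomial time to a planar 2-connected graph $H'$ with the following properties:
\begin{itemize}
\item if the split index $s$ is odd, then $\lct(H')=1$, so a single vertex hits all longest cycles;
\item if $s$ is even, then $\lct(H')=\lct(H)\ge 2$, so no single vertex does.
\end{itemize}
So the output is a yes-instance exactly when $s$ is odd. By Wagner's criterion this yields $\Theta^p_2$-hardness. We need $\lct(H)\ge 2$ exactly, not $\ge C$, so any fixed planar 2-connected $H$ without a single longest-cycle transversal vertex suffices.

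\textbf{Main obstacle.} The delicate step is inherited from \cref{thm:LCT-strong-inapproximable-2con}: the even case must really give $\lct(H')=\lct(H)$. This rests on three facts:
\begin{itemize}
\item the substituted gadgets keep 2-connectivity and planarity, with endpoints on a common face;
\item every longest cycle of $H'$ traverses each gadget it enters along a longest $v$–$w$ path;
\item the transversal correspondence holds in both directions. A hitting vertex placed inside a gadget $G_e$ can be replaced by an endpoint of the edge $e$, because every longest cycle through $G_e$ passes through $v_e$.
\end{itemize}
I would verify these briefly and then cite the theorem.
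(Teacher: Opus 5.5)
Your proposal is correct and follows the paper's argument. Containment uses non-adaptive queries for $\gamma(G)$ and every $\gamma(G-v)$, as in \cref{lem:containment}, and hardness invokes \cref{thm:LCT-strong-inapproximable-2con} with Zamfirescu's 914-vertex planar 2-connected graph with $\lct(H)=3$; your extra points (the acyclic-input convention, needing only $\lct(H)\ge 2$, and replacing interior gadget vertices by endpoints of $e$ in the transversal correspondence) are sound details the paper leaves implicit.
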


\section{Maximum Clique and Other Transversal Problems}

In this section, we quickly discuss a general observation. 
We showed in the previous sections that it is $\Theta^p_2$-complete to decide whether there exists a transversal of size 1 for the longest path, or the longest cycle in a graph. 
It is natural to ask whether an analogous result holds for many other settings where a transversal of size 1 is sought after.
In this section, we answer the question positively for the maximum clique problem (and analogously the maximum independent set problem). 
We conjecture that very similar results hold as well for many other problems.
In the following, we use the term \emph{maximum clique} to denote any clique that is cardinality-maximal among all cliques of a given graph. Consider the problem \textsc{Singleton-Max-Clique-Transversal}.

\begin{mybox}{\textsc{Singleton-Max-Clique-Transversal}}
\textbf{Input:} A connected graph $G$.\\
\textbf{Question:} Is there a vertex $v$ such that every maximum clique of $G$ includes $v$?
\end{mybox}

\begin{theorem}
\textsc{Singleton-Max-Clique-Transversal} is $\Theta^p_2$-complete.
\end{theorem}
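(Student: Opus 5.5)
Containment follows the pattern of \cref{lem:containment}. Issue in parallel the NP-queries ``does $G$ contain a clique of size at least $k$?'' and, for every $v \in V(G)$, ``does $G - v$ contain a clique of size at least $k$?'', for $k = 1,\dots,n$. From the answers we read off $\omega(G)$ and every $\omega(G-v)$. A vertex $v$ lies in all maximum cliques exactly when $\omega(G - v) < \omega(G)$, so the problem is in $\text{P}_\parallel^\text{NP} = \Theta^p_2$.

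For hardness we mimic the proof of \cref{thm:LPT-strong-inapproximable}, using Wagner's technique with \textsc{SAT Parity}. The key observation is that cliques behave well under joins: if $G = A \ast B$, then $\omega(G) = \omega(A) + \omega(B)$, and the maximum cliques of $G$ are exactly the unions of a maximum clique of $A$ with a maximum clique of $B$. We first need a clique analogue of \cref{lem:gadget-ham-path}. By the standard reduction from \textsc{3SAT} to \textsc{Clique}, given $\varphi$ with $m$ clauses we obtain a graph $K(\varphi)$ with $\omega(K(\varphi)) = m$ if $\varphi$ is satisfiable and $\omega(K(\varphi)) \le m-1$ otherwise. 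We then take its disjoint union with a clique on $m-1$ vertices, so that $\omega \in \{m, m-1\}$ exactly as in \cref{lem:gadget-ham-path}.

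The analogue of the odd/even gadgets in \cref{lem:gadget-odd-even} is obtained by replacing concatenation with join. Pad every $K(\varphi_i)$ with disjoint cliques so that all share the same $m$. Set $G_\text{odd} := K(\varphi_1) \ast K(\varphi_3) \ast \dots \ast K(\varphi_{2n-1})$, and define $G_\text{even}$ analogously from the even-indexed formulas. Then $\omega(G_\text{odd}) = \omega(G_\text{even})$ if $s$ is even, and $\omega(G_\text{odd}) = \omega(G_\text{even}) + 1$ if $s$ is odd, by the same counting as in \cref{lem:gadget-odd-even}. Now let $D_1 := G_\text{odd}$ and $D_2 := G_\text{even}$ be vertex-disjoint, and let $G := (D_1 \,\dot\cup\, D_2) \ast \{u\}$, where $u$ is a new apex vertex making $G$ connected. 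Every maximum clique contains $u$, which is useless, so instead we take $G := D_1 \,\dot\cup\, D_2$ plus a path connecting them. Either way the maximum cliques of $G$ are the maximum cliques of $D_1$ or of $D_2$, whichever has larger clique number, possibly both; the connecting path contributes only cliques of size $2$, which we dominate by making $\omega \ge 3$ via padding. If $s$ is odd, only cliques in $D_1$ are maximum, and we need a vertex common to all of them. If $s$ is even, maximum cliques exist in both $D_1$ and $D_2$, which are disjoint, so no single vertex hits all of them.

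The main obstacle is guaranteeing a common vertex in the odd case, since the maximum cliques of $G_\text{odd}$ need not share a vertex. The fix is to join $D_1$ with one extra vertex $z$, i.e.\ to use $D_1 := G_\text{odd} \ast \{z\}$, and, to keep the comparison balanced, $D_2 := G_\text{even} \ast \{z'\}$. Then every maximum clique of $D_1$ contains $z$. In the odd case every maximum clique of $G$ lies in $D_1$ and hence contains $z$, so the answer is yes. In the even case both $D_1$ and $D_2$ carry maximum cliques and they are disjoint, so the answer is no. Connectivity is obtained by adding an edge $zz'$, which creates only the clique $\{z,z'\}$ of size $2$, far below $\omega$. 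This gives the required Wagner-style map, so \textsc{Singleton-Max-Clique-Transversal} is $\Theta^p_2$-hard, and combined with containment it is $\Theta^p_2$-complete.
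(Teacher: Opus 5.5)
Your architecture matches the paper's: containment via $O(n^2)$ parallel clique queries, odd/even gadgets built by \emph{joining} the per-formula gadgets so that clique numbers add, a disjoint union of the two gadgets plus one connecting edge, and universal vertices to force a common vertex in the odd case. One step, however, fails as written: the equalization ``pad every $K(\varphi_i)$ with disjoint cliques so that all share the same $m$''. A disjoint union with a clique $K_c$ replaces $\omega$ by $\max\{\omega, c\}$. If $c \le m_i - 1$, nothing changes. If $c \ge m_i$, the satisfiable and unsatisfiable cases both give the value $\max\{m_i,c\}$, so the satisfiability information is lost. Disjoint cliques therefore can never move the pair $\{m_i, m_i-1\}$ up to $\{m, m-1\}$ when $m_i < m$. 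Without equal $m_i$ (or at least $\sum_{i \text{ odd}} m_i = \sum_{i \text{ even}} m_i$), the comparison of $\omega(G_\text{odd})$ with $\omega(G_\text{even})$ no longer depends only on the parity of $s$, and the reduction breaks.

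The repair is to shift instead of cap. Join $K(\varphi_i)$ with a clique on $m - m_i$ new vertices, i.e.\ add $m-m_i$ universal vertices. This raises the clique number by exactly $m-m_i$ in both cases, and it is precisely the paper's set $W_i$. Alternatively, add $m-m_i$ trivially satisfiable clauses on fresh variables to $\varphi_i$ before applying Karp's reduction. With this fix the rest of your argument goes through. The remaining differences from the paper are cosmetic. The paper takes $k := \max_i k_i + 1$ so that every $W_i$ is nonempty; any vertex of $W_1 \cup W_3 \cup \dots \cup W_{2n-1}$ then lies in all maximum cliques in the odd case. Your extra universal vertices $z, z'$ achieve the same thing explicitly, and your edge $zz'$ plays the role of the paper's arbitrary connecting edge.
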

\begin{proof}
    For the containment in $\Theta^p_2$, we argue analogously to \cref{lem:containment}. Given a graph $G$ on $n$ vertices, we make $O(n^2)$ nonadaptive NP-queries as follows. 
    For each $v \in V$ and each $k \in \{1,\dots,n\}$, query if $G - v$ contains a clique of size at least $k$. For $k \in \{1,\dots,n\}$, query if $G$ contains a clique of size at least $k$.
    This gives enough information to deduce whether there exists a vertex $v$ such that the size of the maximum clique in $G - v$ is smaller than in $G$. Therefore $\textsc{Singleton-Max-Clique-Transversal} \in \Theta^p_2$.

    For the hardness, we reduce from $\textsc{SAT Parity}$. 
    As a first step, let a SAT-formula $\varphi$ be given. 
    Karp \cite{DBLP:conf/coco/Karp72} famously showed that the \textsc{Clique} problem is NP-complete. 
    Karps's result implies that one can in polynomial time compute from $\varphi$ some graph $H(\varphi)$ and integer $k$ such that if $\varphi$ is satisfiable, 
    then $\omega(H(\varphi)) = k$, and if $\varphi$ is not satisfiable then $\omega(H(\varphi)) < k$. Here $\omega(\cdot)$ denotes the size of the largest clique in the graph.
    We can define a new graph $G(\varphi)$ from $H(\varphi)$ by taking the disjoint union with a clique of size $k-1$. Then $\omega(G(\varphi)) = k$ if $\varphi$ is satisfiable and $\omega(G(\varphi)) = k-1$ otherwise.

    Consider now an instance $(\varphi_1,\dots,\varphi_{2n})$ of $\textsc{SAT Parity}$.
    We can due to the above compute gadgets $G(\varphi_1),\dots,G(\varphi_{2n})$ and integers $k_1,\dots,k_{2n}$ such that $\omega(G(\varphi_i))$ equals either $k_i$ or $k_i - 1$, depending on whether $\varphi_i$ is satisfiable. 
    Let $k := \max_{i=1}^{2n} k_i+1$. We can similarly to \cref{lem:gadget-odd-even} assume w.l.o.g. that $k_i = k$ for all $i =1,\dots,2n$. 
    Indeed, for all $i = 1,\dots,2n$, we can modify gadget $G(\varphi_i)$ by introducing a set $W_i$ of $k - k_i$ new vertices connected to all other vertices (including themselves). 
    Note that $W_i \neq \emptyset$ and $W_i$ is contained in every maximal clique of $G(\varphi_i)$.
    
    We define a gadget $G_\text{odd}$ by a two-step process. We first take the disjoint union $G(\varphi_1) \cup G(\varphi_3) \cup \dots \cup G(\varphi_{2n-1})$. 
    Then we add all possible edges between vertices of $G(\varphi_i)$ and $G(\varphi_j)$ for $i \neq j$. This completes the definition of $G_\text{odd}$. 
    Similarly, $G_\text{even}$ is defined as the disjoint union $G(\varphi_2) \cup G(\varphi_4) \cup \dots \cup G(\varphi_{2n})$ and adding all edges between $G(\varphi_i)$ and $G(\varphi_j)$ for $i \neq j$.
    It is clear that both these gadgets are connected graphs, and that $\omega(G_\text{odd})$ and $\omega(G_\text{even})$ depend on the split index $s$ of the sequence $(\varphi_1,\dots,\varphi_{2n})$. Very similar to \cref{lem:gadget-odd-even}, we can argue that if $s$ is even, then 
        \[
            \omega(G_\text{odd}) = \omega(G_\text{even}) =  \frac{s}{2}k + \frac{2n-s}{2}(k-1).
        \]
        \item However, if $s$ is odd, then 
        \[
        \omega(G_\text{odd}) = \frac{s+1}{2}k + \frac{2n-s-1}{2}(k-1) > \omega(G_\text{even}) = \frac{s-1}{2}k + \frac{2n-s+1}{2}(k-1).
        \]

    Finally, for the given instance $(\varphi_1,\dots,\varphi_{2n})$ of $\textsc{SAT Parity}$, the corresponding instance $G'$ of \textsc{Singleton-Max-Clique-Transversal} is given by the disjoint union $G_\text{odd} \cup G_\text{even}$. Note that $G'$ is not yet connected. 
    To fix that, we pick an arbitrary vertex $v$ of $G_\text{odd}$, and add a single edge from $v$ to an arbitrary vertex of $G_\text{even}$. This completes the description of $G'$.
    Note that $G'$ can be computed in polynomial time from the \textsc{SAT Parity} instance. Furthermore, if $s$ is even, then $\omega(G_\text{odd}) = \omega(G_\text{even})$. 
    Hence there are two disjoint cliques of maximum size, so $G'$ is a no-instance of   \textsc{Singleton-Max-Clique-Transversal}.
    If in contrast $s$ is odd, then $\omega(G_\text{odd}) > \omega(G_\text{even})$, so all maximum cliques are contained in $G_\text{odd}$. Furthermore, by our construction every maximum clique contains the vertex set $W_1 \cup W_3 \cup \dots W_{2n-1}$. 
    Hence any vertex from this set hits all maximum cliques. We conclude that $G'$ is a yes-instance of \textsc{Singleton-Max-Clique-Transversal}. This reduction shows that \textsc{Singleton-Max-Clique-Transversal} is $\Theta^p_2$-hard.
\end{proof}

As a final remark, we offer the following technical insight. Consider the only slightly differently defined problem \textsc{Singleton-k-Clique-Transversal}, where an explicit parameter $k$ is part of the input.

\begin{mybox}{\textsc{Singleton-k-Clique-Transversal}}
\textbf{Input:} A connected graph $G$, integer $k$.\\
\textbf{Question:} Is there a vertex $v$ such that every clique of size at least $k$ includes $v$?
\end{mybox}

Then the complexity of this problem behaves differently. It is shown in \cite{grune2026complexity} that the problem $\textsc{Singleton-k-Clique-Transversal}$ is contained in coNP. Since $\text{co-NP} \subseteq \Theta^p_2 \subseteq \Sigma^p_2$, this means that \textsc{Singleton-k-Clique-Transversal} is likely not $\Theta^p_2$-complete, unless the polynomial hierarchy collapses.
If one asks instead of a single vertex hitting all cliques for a set of vertices hitting all cliques, then the complexity of the problem changes again, and it becomes $\Sigma^p_2$-complete. Similar observations hold for many other transversal problems. We refer the reader to \cite{grune2026complexity} for a detailed discussion.

\bibliography{ref}

\end{document}